\theoremstyle{plain}
\newtheorem{theorem}{Theorem}[section]
\newtheorem{proposition}[theorem]{Proposition}
\theoremstyle{definition}
\theoremstyle{remark}
\newtheorem{remark}{Remark}
\newcommand\NoDo{\renewcommand\algorithmicdo{}}
\newcommand\NoThen{\renewcommand\algorithmicthen{}}
\begin{document}


\title{A posteriori multi-stage optimal trading under transaction costs and a diversification constraint}

\author{Mogens Graf Plessen$^\ast\dag$\thanks{$^\ast$Corresponding author.
Email: mogens.plessen@imtlucca.it} and Alberto Bemporad${\dag}$
\affil{$\dag$IMT School for Advanced Studies Lucca, Piazza S. Francesco 19, 55100 Lucca, Italy}}
 
\maketitle

\begin{abstract}
This paper presents a simple method for a posteriori (historical) multi-variate multi-stage optimal trading under transaction costs and a diversification constraint. Starting from a given amount of money in some currency, we analyze the stage-wise optimal allocation over a time horizon with potential investments in multiple currencies and various assets. Three variants are discussed, including  unconstrained trading frequency, a fixed number of total admissable trades, and the waiting of a specific time-period after every executed trade until the next trade. The developed methods are based on efficient graph generation and consequent graph search, and are evaluated quantitatively on real-world data. The fundamental motivation of this work is preparatory labeling of financial time-series data for supervised machine learning.    
\end{abstract}

\begin{keywords}
A posteriori optimal trading, Transaction costs, Diversification constraint, Multi-variate trading, Multi-stage trading, Labeling of financial time-series data.
\end{keywords}


\section{Introduction}

Algorithmic assistance to traders and portfolio managers has become standard practice. It can be distinguished between \emph{algorithmic trading}, i.e., fully automated high- or low-frequency trading, and \emph{algorithmic screening} or semi-automated high- or low-frequency trading with computer programs providing recommendations to the human trader. Both algorithmic trading and screening are fundamentally based on predictions of future developments. Predictions may be made based on, for example, financial accountancy, technical chart analysis, global macroeconomic analysis, news, sentiments and combinations thereof. There exists a plethora of literature on \emph{financial times-series forecasting}. For methods based on support vector machines, see, for example, \cite{tay2001application}, \cite{kim2003financial}, \cite{van2001financial}, and \cite{chowdhury2018short}. In general, influential factors on trading decisions are trading frequency, targeted time horizons, performance expectations, asset choices, foreign exchange rates, transaction costs and risk-management, for example, in the form of investment diversification. Our paper belongs to the class of technical chart analysis. The data on which the analysis is based are daily adjusted  closing-prices of various currencies and assets.  Short-selling, borrowing of money, and the trading of derivatives are not treated, eventhough the presented methodologies can be extended to include them.

The motivation and contribution of this paper is threefold: i) the development of a simple algorithm for a posteriori (historical) multi-variate multi-stage optimal trading under transaction costs and a diversification constraint, including the discussion of unconstrained trading frequency, a fixed number of total admissable trades, and the waiting of a specific time-period after every executed trade until the next trade; ii) the quantification of the effects of transaction costs on a posteriori optimal trading evaluated on real-world data; and finally iii) the preparatory labeling of financial time-series data for supervised machine learning.

This paper is related most closely to the work of \cite{boyarshinov2010efficient}, who discuss a dynamic programming solution to the optimal investment in either one stock or one bond under consideration of unconstrained trading frequency, and a bound on the admissable number of trades. Additionally, a method for optimization of Sterling and Sharpe ratio are presented. No real-world data analysis is conducted though. Additional differences are our discussion of a diversification constraint, the constraint of introducing a waiting period after every executed trade until the next trade, and a synchronous trading constraint. Furthermore, we introduce a heuristic for each of the constrained optimal investment problems (with a bound on the admissable number of trades and a waiting period constraint), thereby reducing the computational complexity of the methods while not compromising optimality of the resulting solution. For an overview of measures to reduce risk by the introduction of various constraints, for example, on the drawdown probability or shortselling, see \cite{lobo2007portfolio} where \emph{one-step ahead} optimization is conducted, importantly, based on estimates of one-step ahead returns and covariance matrices of a set of risky assets. In contrast this papers is concerned about \emph{multi-stage} optimization and historical optimal trading with hindsight. The mathematical approaches therefore differ significantly (convex optimization vs. graph search). Optimal trading based on stochastic models, usually stochastic differential equations (SDEs), and the consideration of fixed and/or proportional transaction costs is treated, for example, in \cite{altarovici2015asymptotics}, \cite{lo2001asset}, \cite{morton1995optimal} and \cite{korn1998portfolio}. In contrast, this paper is data-based only, i.e., without consideration of any mathematical model explaining the generation of this data. For a discussion about the existence of trends in financial time series, see \cite{fliess2009mathematical}. For the general discussion of \emph{transcation cost analysis} (TCA), see \cite{gomes2010transaction}, and further \cite{kissell2013science}, \cite{kissell2008transaction} and \cite{kissell2006expanded} for a discussion about how TCA can be used by portfolio managers to improve performance and the development of a framework for pre-, intra- and post-trade analysis.

The remainder of this paper is organized as follows. Section \ref{sec_sys_mdling} discusses transition dynamics modeling and introduces notation. Multi-stage optimization without a diversification constraint is treated in Section \ref{sec_multi_stage_opt_without_diversification}, whereas Section \ref{sec_multi_stage_opt_with_diversification} includes a diversification constraint. Section \ref{sec_NumEx} presents numerical examples based on real-world data, before concluding with Section \ref{sec_Conclusion}.

\section{One-stage modeling of transition dynamics \label{sec_sys_mdling}}

\subsection{Notation\label{subsec_states_notation}}

Let time index $t\in\mathbb{Z}_+$ be associated with the trading period $T_s$, such that trading instants are described as $tT_s$, whereby $T_s$ may typically be, for example, one week, one day, or less (for intraday trading). The system state $\bm{z_t}$ at time $t$ is defined as an eight-dimensional vector of mixed integer and real-valued quantities,    
\begin{equation}
\bm{z_t} = \begin{bmatrix} i_t & k_t & j_t & m_t^{c_t} & n_t & w_t^0 & d_t & c_t \end{bmatrix},\label{eq_def_z_t}
\end{equation}
where $i_t\in \mathcal{I} = (\mathcal{I}_{N_c} \cup \mathcal{I}_{N_a})$ denotes investment identification numbers partitioned into  $N_c$ currencies and $N_a$ different risky non-currency assets, such that $\mathcal{I}_{N_c}=\{0,1,\dots,N_c-1\}$ and $\mathcal{I}_{N_a}=\{N_c,\dots,N_c+N_a-1\}$. For ease of reference, in the following, we lump currencies and non-currency assets in the term \emph{asset} and only distinguish when context-necessary. The integer number of conducted trades along an \emph{investment trajectory} shall be denoted by $k_t\in\mathbb{Z}$, whereby an investment trajectory is here defined as a sequence of states $z_t$, $t=0,1,\dots,N_t$, where $N_t$ is the time horizon length. Let $j_t$ denote the investment identification number preceding $i_t$ at time $t-1$ (parent node), i.e., $j_t=i_{t-1}$. We define $m_t^{c_t}\in\mathbb{R}_+$ as the real-valued and positive \emph{cash position} (liquidity) held in the currency identified by $c_t\in\mathcal{I}_{N_c}$. The number $n_t\in\mathbb{Z}_+$ indicates the number of non-currency assets held. The current wealth, composed of cash position and non-currency asset, is denoted by $w_t^0$ and shall always be in monetary units EUR. Euro is considered as our reference currency and shall throughout this paper be identified by $i_t=0$. The integer number of time samples since the last trade is defined by $d_t\in\mathbb{Z}_+$. The (unitless) foreign exchange (fx) rate $x_t^{c_1,c_2}$ for two currencies $c_1\in\mathcal{I}_{N_c}$ and $c_2\in\mathcal{I}_{N_c}$ is defined as $x_t^{c_1,c_2}$ such that $m_t^{c_2} = m_t^{c_1} x_t^{c_1,c_2}$.
Thus, $m_t^{c_1}$ and $m_t^{c_2}$ have numerical values, however, with units identified by $c_1\in\mathcal{I}_{N_c}$ and $c_2\in\mathcal{I}_{N_c}$, respectively. Non-currency asset prices are denoted by $p_t^{c_t,a}$, whereby $c_t$ identifies the price unit and $a\in\mathcal{I}_{N_a}$ the asset. We treat foreign exchange rates and asset prices as time-varying parameters obtained from data. In the sequel, various sets of admissable system states are defined. For brevity, we therefore use a shorthand notation. For example, we define a set as $\mathcal{Z}_t=\left\{ \bm{z_t}: i_t=10 \right\}$, implying $\mathcal{Z}_t=\left\{ \bm{z_t}: i_t=10,~\text{and $i_t$ associated with $\bm{z_t}$ according \eqref{eq_def_z_t}}\right\}$. 

\subsection{Transaction costs}

For the modeling of transaction costs, we follow the notion of \cite{lobo2007portfolio}, modeling transaction costs as non-convex with a fixed charge for any nonzero trade (fixed transaction costs) and a linear term scaling with the quantity traded (proportional transaction costs). Thus, for a foreign exchange at time $t-1$, we model $
m_t^{c_t} = m_{t-1}^{c_{t-1}}x_{t-1}^{c_{t-1},c_t}(1-\epsilon_\text{fx}^{c_{t-1},c_t}) - \beta_\text{fx}^{c_{t-1},c_t}$, where $\epsilon_\text{fx}^{c_{t-1},c_t}$ and $\beta_\text{fx}^{c_{t-1},c_t}$ denote the linear term and the fixed charge, respectively. Similarly, transaction costs for transactions from currency to non-currency asset, between assets of different currencies and the like can be defined. We can further differentiate between linear terms for buying and selling. To fully introduce notation for transaction costs ($\epsilon_\text{buy}^{i_t},\beta_\text{buy}^{i_t}\geq 0$), we state the transaction from a cash position towards an asset investment and vice versa. For a transaction of buying $n_{t-1}$ of asset $i_{t-1}$ at time $t-1$, we obtain
\[
m_t^{c_t} = m_{t-1}^{c_{t-1}}x_{t-1}^{c_{t-1},c_t}(1-\epsilon_\text{fx}^{c_{t-1},c_t}) - \beta_\text{fx}^{c_{t-1},c_t} - n_{t-1}p_{t-1}^{c_t,i_t}(1+\epsilon_\text{buy}^{i_t}) - \beta_\text{buy}^{i_t}.
\]
For a transaction of selling $n_{t-1}$ of asset $i_{t-1}$ and transforming to currency $c_t$, we obtain
\[
m_t^{c_t} = \left( m_{t-1}^{c_{t-1}} + n_{t-1}p_{t-1}^{c_{t-1},i_{t-1}}(1-\epsilon_\text{sell}^{i_{t-1}}) - \beta_\text{sell}^{i_{t-1}} \right)x_{t-1}^{c_{t-1},c_t}(1-\epsilon_\text{fx}^{c_{t-1},c_t}) - \beta_\text{fx}^{c_{t-1},c_t}.
\]

Finally, note that transaction costs may vary dependent on the assets involved. 
\subsection{Transition dynamics\label{subsec_transition_dynamics}}

Given our assumption of being able to invest in currencies and non-currency assets, there are six general types of transitions dependent on the investment at time $t-1$. For an introduction to \emph{Markov Decision Processes} (MDP), see \cite{puterman2014markov}. We initialize $\bm{z_0} = \begin{bmatrix} 0 & 0 & 0 & m_0^{0} & 0 & m_0^0 & 0 & 0 \end{bmatrix}$. Then, the transition dynamics are 
\begin{equation}
\bm{z_t} = \begin{cases} \bm{ z_t^{(1)}},~\text{if}~\{i_t:i_t=i_{t-1},~z_{t-1}~\text{with}~i_{t-1}
\in\mathcal{I}_{N_c}\},\\
\bm{z_t^{(2)}},~\text{if}~\{i_t:i_{t}\in\mathcal{I}_{N_c}\backslash \{i_{t-1}\},~z_{t-1}~\text{with}~i_{t-1}\in\mathcal{I}_{N_c} \},\\
\bm{z_t^{(3)}},~\text{if}~\{i_t:i_{t}\in\mathcal{I}_{N_a},~z_{t-1}~\text{with}~
i_{t-1}\in\mathcal{I}_{N_c} \},\\
\bm{z_t^{(4)}},~\text{if}~\{i_t:i_t=i_{t-1},~z_{t-1}~\text{with}~
i_{t-1}\in\mathcal{I}_{N_a}\},\\
\bm{z_t^{(5)}},~\text{if}~\{i_t:i_{t}\in\mathcal{I}_{N_c},~z_{t-1}~\text{with}~
i_{t-1}\in\mathcal{I}_{N_a} \},\\
\bm{z_t^{(6)}},~\text{if}~\{i_t:i_{t}\in\mathcal{I}_{N_a}\backslash \{i_{t-1}\},~z_{t-1}~\text{with}~
i_{t-1}\in\mathcal{I}_{N_a} \},
 \end{cases}\label{eq_def_zt}
\end{equation}
where $z_t^{(j)},~\forall j=1,\dots,6$, is defined next, and our control variable $u_{t-1}$ is the targeted investment identified by variable $i_t$, i.e., $u_{t-1}=i_t$. We have
\begin{align*}
\bm{z_t^{(1)}} &= \begin{bmatrix} i_{t-1} & k_{t-1} & j_{t-1} & m_{t-1}^{c_{t-1}} & 0 & w_{t-1}^0 & \tilde{d}_t & c_{t-1}\end{bmatrix},\\ 
\bm{z_t^{(2)}} &= \begin{bmatrix} i_t & k_{t-1}+1 & j_{t-1} &  \varphi(m_t^{c_t}) & 0 & m_t^{c_t}x_t^{c_t,0} & \tilde{d}_t & i_t\end{bmatrix},\\
\bm{z_t^{(3)}} &= \begin{bmatrix} i_t & k_{t-1}+1 & j_{t-1} & \tilde{m}_{t}^{c_{t}} & \tilde{n}_t& \tilde{w}_t^0 & \tilde{d}_t & c(i_t) \end{bmatrix},
\end{align*}
with $c(i_t)$ denoting the currency of asset $i_t$ and with
\begin{align*}
\tilde{d}_t &= \begin{cases} d_{t-1}+1, & \text{if}~d_{t-1}<D-1,\\
0 & \text{otherwise,} \end{cases}\\
\varphi(m_t^{c_t}) &= m_{t-1}^{c_{t-1}}x_{t-1}^{c_{t-1},c_t}(1-\epsilon_\text{fx}^{c_{t-1},c_t}) - \beta_\text{fx}^{c_{t-1},c_t},
\end{align*}
and where variable $D$ determines an overflow in $d_t$ and will become relevant when later discussing the constraint of waiting a specific amount of time until the next admissable trade. Furthermore, $\tilde{m}_t^{c_t}$ and $\tilde{n}_t$ are obtained from solving 
\begin{equation}
\max_{m_t^{c_t}\geq 0} ~ \left\{ n_t: n_{t}= \frac{m_{t-1}^{c_{t-1}}x_{t-1}^{c_{t-1},c_t}(1-\epsilon_\text{fx}^{c_{t-1},c_t}) - \beta_\text{fx}^{c_{t-1},c_t} - \beta_\text{buy}^{i_t} - m_t^{c_t} }{ p_{t-1}^{c_t,i_t}(1+\epsilon_\text{buy}^{i_t})},~ n_{t} \in\mathbb{Z}_+  \right\},\label{eq:OP_zt(3)_mntilde}
\end{equation} 
with $\tilde{m}_t^{c_t}$ denoting the optimizer and $\tilde{n}_t$ the corresponding optimal objective function value. Thus, given $m_{t-1}^{c_{t-1}}$, we find the largest possible positive integer number of assets we can purchase under the consideration of transaction costs. The (small) cash residual is then $\tilde{m}_t^{c_t}\geq 0$. Therefore, for the portfolio wealth at time $t$ in currency EUR, we obtain $\tilde{w}_t^0 = (\tilde{m}_t^{c_t} + \tilde{n}_t p_t^{c_t,i_t})x_t^{c_t,0}$. Furthermore,  
\begin{align*}
\bm{z_t^{(4)}} &= \begin{bmatrix} i_{t-1} & k_{t-1} & j_{t-1} & m_{t-1}^{c_{t-1}} & n_{t-1}  &w_{t-1}^0 &  \tilde{d}_t& c_{t-1} & \end{bmatrix},\\
\bm{z_t^{(5)}} &= \begin{bmatrix} i_t& k_{t-1}+1 & j_{t-1}  & \phi(m_t^{c_t}) & 0 & m_t^{c_t}x_t^{c_t,0} & \tilde{d}_t & i_t & \end{bmatrix},\\
\bm{z_t^{(6)}} &= \begin{bmatrix} i_t&  k_{t-1}+1 & j_{t-1} & \bar{m}_{t}^{c_{t}} & \bar{n}_t & \bar{w}_t^0 & \tilde{d}_t & c(i_t) \end{bmatrix},
\end{align*}
with 
\begin{equation*}
\phi(m_t^{c_t}) = \left( m_{t-1}^{c_{t-1}} + n_{t-1}p_{t-1}^{c_{t-1},i_{t-1}}(1-\epsilon_\text{sell}^{i_{t-1}}) - \beta_\text{sell}^{i_{t-1}} \right)x_{t-1}^{c_{t-1},c_t}(1-\epsilon_\text{fx}^{c_{t-1},c_t}) - \beta_\text{fx}^{c_{t-1},c_t},
\end{equation*}
and where $\bar{m}_t^{c_t}$ and $\bar{n}_t$ are obtained from solving 
\begin{equation}
\max_{m_t^{c_t}\geq 0}~\left\{ n_t: n_{t}= \frac{ \phi(m_t^{c_t}) - \beta_\text{buy}^{i_t} - m_t^{c_t}  }{ p_{t-1}^{c_t,i_t}(1+\epsilon_\text{buy}^{i_t})},~n_{t} \in\mathbb{Z}_+ \right\},
\label{eq:OP_zt(6)_mnbar}
\end{equation} 
with $\bar{m}_t^{c_t}$ denoting the optimizer and $\bar{n}_t$ the corresponding optimal objective function value. Then, for the portfolio wealth at time $t$ in EUR, we obtain $\bar{w}_t^0 = (\bar{m}_t^{c_t} + \bar{n}_t p_t^{c_t,i_t})x_t^{c_t,0}$.
 
The solution to \eqref{eq:OP_zt(3)_mntilde} and \eqref{eq:OP_zt(6)_mnbar} can be easily computed by setting $m_t^{c_t}$ initially zero, then rounding the corresponding real-valued $n_t$ to the largest smaller integer, before then computing the cash residuals, respectively. The methodology of preserving a cash residual is implemented in order to enforce an integer-valued number of shares in assets. 

\subsection{Remarks about optimality and transition dynamics modeling}

An investment trajectory is defined as a sequence of states $z_t$, $t=0,1,\dots,N_t$. We wish to find an optimal (in the sense of wealth-maximizing) investment trajectory. Several remarks about above problem formulation and transition dynamics modeling can be made. 

First, suppose all of the initial money $m_0^0$ is fully allocated to the optimal investment trajectory, then there is no diversification present, and, defining the final return as $r_{N_t} = (w_{N_t}^0 - m_0^0)/m_0^0$, the optimal investment trajectory never returns less than $r_{N_t}<0\%$. This is since one feasible investment trajectory is to remain invested in the initial reference currency (EUR) for all $t=0,1,\dots,N_t$. This can be taken into account as a heuristic for transition graph generation.

Second, above transition dynamics modeling naturally results in cash residuals when investing in non-currency assets. According to our modeling, the cash residuals are enforced to be in the currency of the purchased asset. This may be suboptimal when the non-currency asset in which we invest is extremely expensive (e.g., worth thousands of EUR), since the resulting cash-residuals may then be very large. Then, in general it may be worthwhile to invest the cash residual into another asset which is more profitable than the ``enforced'' residual currency. Two comments are made. On the one hand, assets with such prices are rare in practice. On the other hand, and more importantly, in order to admit free investing of cash residuals, an extension of the state space (beyond 8 variables) would be required such that any cash residual could be invested in any of the $N_c+N_a-1$ assets. Then, $N_c+N_a-1$ additional branches would need to be added to the transition graph, which, in the most general case, would also need further branching at subsequent stages. This  considerably complicates the tracking of states and is therefore not applied in the following.   

Third, transition dynamics \eqref{eq_def_zt} indicate an \emph{all-or-nothing} strategy. At every time $t$, the investment at that time is maintained or, alternatively, reallocated to exactly \emph{one}--the most profitable--currency or asset, whereby cash residuals are accounted for as described in the previous paragraph.

Fourth, let us briefly discuss the effect of absence of transaction costs on optimal trading frequency. For simplicity let us consider the case of being able to invest in an asset of variable value (such as a stock) and holding of cash in the currency in which the risky asset is traded. Relevant discrete-time dynamics can then be written as
\begin{equation}
w_t = m_t + n_tp_t, \quad \text{and} \quad m_t = m_{t-1} - n_t p_{t-1},\label{eq_wtmt_NoTransCosts}
\end{equation} 
with $m_t$ the cash position, $n_t$ the number of shares in the risky asset, $p_t$ the price of the asset and $w_t$ the wealth at time $t$. At every time $t$ a decision about a reallocation of investments is made. For a final time period $t=0,1,\dots,N_t$, we wish to maximize $w_{N_t}-w_0$, which can be expanded as
\begin{equation}
w_{N_t}-w_0 = \sum_{t=1}^{N_t} w_t - w_{t-1}.\label{eq_DeltawNt_NoTransCosts}
\end{equation}
To maximize \eqref{eq_DeltawNt_NoTransCosts}, we thus have to maximize the increments. Combining \eqref{eq_wtmt_NoTransCosts}, we write $
w_t-w_{t-1} = n_t(p_t-p_{t-1}) - n_{t-1}p_{t-1}$,
which therefore motivates the following optimal trading strategy, implemented at every $t-1$. If $p_t> p_{t-1}$, maximize $n_t$ and set $n_{t-1}=0$ (i.e., allocate maximal ressources towards the asset), and if $p_t\leq p_{t-1}$, set $n_t=0$ and minimize $n_{t-1}$ (i.e., sell the asset if held at $t-1$ and allocate maximal ressources towards the cash position). We profit on a price increase of the asset, and maintain our wealth on a price decrease\footnote{We here assume a \emph{long-only} strategy. By the use of derivative contracts, we can increase wealth on a price decrease as well.}. Thus, it is optimal to trade upon any change of sign of $\Delta p_t = p_t-p_{t-1}$. This is visualized Figure \ref{fig:hft_NoTransCosts} and summarized in the following remark.   
\begin{figure}[htbp]
\centering  
\begin {tikzpicture}[-latex ,auto ,node distance =4 cm and 5cm ,on grid ,
semithick ,
state/.style ={ circle ,top color =white , bottom color = white ,
draw, text=blue , minimum width =1 cm}]
\node[state] (A) {cash};
\node[state] (B) [right =of A] {asset};
\path (A) edge [loop left] node[left] {if $\Delta p_t\leq 0$} (A);
\path (A) edge [bend left =25] node[above] {if $\Delta p_t>0$} (B);
\path (B) edge [bend left =15] node[below =0.15 cm] {if $\Delta p_t\leq 0$} (A);
\path (B) edge [loop right] node[right] {if $\Delta p_t> 0$} (B);
\end{tikzpicture}
\caption{Visualization of the Markov decision process when optimally trading only cash and an asset in the absence of transaction costs. The optimal trading strategy is to trade upon any change of $\Delta p_t$-sign, i.e., even if this is minimally small ($\Delta p_t \rightarrow 0$).}
\label{fig:hft_NoTransCosts}
\end{figure}
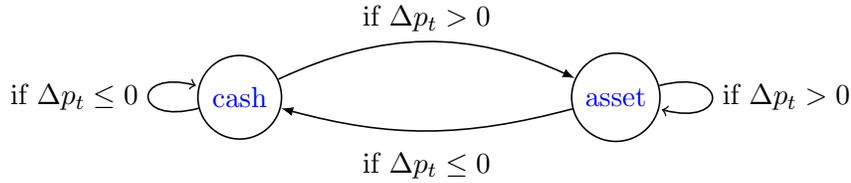
\begin{remark}
In absence of transaction costs, trading upon any change in sign of $\Delta p_t = p_t-p_{t-1}$ is the optimal trading policy.\qed \label{remark_hft}
\end{remark}
Remark \ref{remark_hft} implies that in the absence of transaction costs \emph{high-frequency trading} (\cite{aldridge2013high}, \cite{bowen2010high}) is always the optimal trading policy. Furthermore, note that $\Delta w_t = w_t-w_{t-1}\geq0,~\forall t=1,2,\dots,N_t$. Thus, in the absence of transaction costs, at every time step there is at least no incremental decrease in wealth when employing the optimal trading policy. Naturally, when including non-zero transaction costs, this is in general not the case anymore and we may (at least temporarily) have $\Delta w_t<0$. In addition, the optimal trading frequency will be non-trivially affected. Quantitative examples for optimal trading frequencies under transaction costs are given in Section \ref{sec_NumEx}.

Fifth, motivated by the previous paragraph and under the consideration of transaction costs, a valid question to address is when to sell and rebuy a non-currency asset given a long-term trend but temporary dip in price. Selling and rebuying may optimize profit. The typical minimal decrease in price required for the strategy of selling and rebuying being optimal is approximately twice the proportional transaction cost level. Twice because of selling \emph{and} rebuying. Approximately because of cash residuals due to the integer-valued number of assets and fixed transaction costs that need to be accounted for. 

\section{Multi-stage optimization without diversification constraints\label{sec_multi_stage_opt_without_diversification}}

\subsection{Multi-stage optimization}

Multi-stage transition dynamics can be modeled in form of a transition graph. We therefore assign a set $\mathcal{Z}_t$ of admissable states to every time stage $t$. For investment trajectory optimization without a diversification constraint, we employ \emph{one} transition graph. For investment trajectory optimization with a diversification constraint, \emph{multiple} transition graphs and  $\mathcal{Z}_t^{(q)},~\forall q=0,1,\dots,Q-1$, are defined and discussed in Section \ref{sec_multi_stage_opt_with_diversification}. In contrast, for the remainder of this section we dismiss superscript ``$^{(q)}$'' and focus on optimization without a diversification constrain. We define the initial set $\mathcal{Z}_0 = \left\{\bm{z_0}:\bm{z_0} = \begin{bmatrix} 0 & 0 & 0 & m_0^{0} & 0 & m_0^0 & 0 & 0 \end{bmatrix} \right\}$. In the following, three constraints are discussed that affect transition graph generation.

\subsection{Case 1: Unconstrained trading frequency}

\begin{remark}
Suppose that following a particular investment trajectory, at time $\tau$ an investment state $z_\tau$ is reached with a particular $i_\tau$, $w_\tau^0$ and $j_\tau=i_{\tau-1}$. Suppose further that there exists another investment trajectory resulting in the same asset, i.e., $\tilde{i}_\tau=i_\tau$, but in contrast with $\tilde{w}_\tau^0>w_\tau^0$ and $\tilde{j}_\tau\neq j_\tau$. Then, the former investment trajectory can be dismissed from being a possible candidate segment for the optimal investment trajectory. This is because any trajectory continuing the latter investment trajectory will always outperform a continuation of the former investment trajectory for all $t>\tau$. \qed 
\label{remark1}      
\end{remark}
Remark \ref{remark1} motivates a simple but efficient transition graph generation: first, branch from every state $z_{t-1}\in\mathcal{Z}_{t-1}$ to all possible states $z_t$ at time $t$ according transition dynamics \eqref{eq_def_zt}, whereby we summarize the set of states at time $t-1$ from which $z_t$ can be reached as $\mathcal{J}_{t-1}^{z_t}$; second, select the optimal transitions and thus determine $\mathcal{Z}_t$ according to
\begin{equation}
\mathcal{Z}_t = \left\{\bm{z_t}: \max_{j_t\in\mathcal{J}_{t-1}^{z_t}} \{w_t^0\},~\forall i_t\in\mathcal{I}\right\},\label{eq_case1_nodiversific_mZt}
\end{equation}
recalling the definition $j_t=i_{t-1}$, and thereby selecting the solutions with highest value $w_t^0$, $\forall i_t\in\mathcal{I}= \{ 0,1,\dots,N_c+N_a-1\}$. The resulting transition graph holds a total of $N_z(t) = 1+(N_c+N_a)t$ states up to time $t\geq 0$. For a time horizon $N_t$, the optimal investment strategy, here denoted by superscript ``$^\star$'', can then be reconstructed by proceeding backwards as 
\begin{equation}
\begin{aligned}
\bm{z_{N_t}^\star} &= \left\{ \bm{z_{N_t}}: i_{N_t}^\star = \max_{i_{N_t}} \{w_{N_t}^0 \},~i_{N_t}\in\mathcal{I} \right\},\\
\bm{z_{t-1}^\star} &= \left\{ \bm{z_{t-1}} : i_{t-1} = j_t^\star\right\}, \quad \forall t=N_t,N_t-1,\dots,1.
\end{aligned}\label{eq_zstar_reconstruction}
\end{equation}
The resulting investment trajectory is optimal since by construction of the transition graph as outlined, starting from $z_0$, there exists exactly one wealth maximizing trajectory to every investment $i_t=0,1,\dots,N_c+N_a-1$ for every time $t=0,1,\dots,N_t$. By iterating backwards the optimal investment decisions at every time stage are determined.  

\subsection{Case 2: Bound on the admissable number of trades\label{subsec_case2_NoDiversificationCstrt}}

We constrain the investment trajectory to include at most $K\in\mathbb{Z}_+$ trades during $t=0,1,\dots,N_t$, whereby we define a \emph{trade} as any reallocation of an investment resulting in a change of the asset identification number $i_t$. A transition according $i_t=i_{t-1}$ is consequently no trade. The set of admissable states is generated as 
\begin{equation}
\mathcal{Z}_t = \left\{ \bm{z_t}: \max_{j_t\in\mathcal{J}_{t-1}^{z_t}} \{ w_t^0\},~\forall k_t<K~\text{ and unique}, \text{and}~\forall i_t\in\mathcal{I} \right\}.\label{eq_mathcalZt_case2}
\end{equation}
Consequently, the resulting transition graph holds a total of 
$ N_z(t) = 1 + \sum_{l=1}^t (N_c+N_a)\text{min}(l,K)$ states. The reconstruction of optimal investment decisions is similar to \eqref{eq_zstar_reconstruction}. 

Note that the total number of states, $N_z(t)$, quickly reaches large numbers. We therefore introduce a heuristic to reduce $N_z(t)$ while not compromising optimality of the solution. 
\begin{proposition}
While not compromising the finding of an optimal investment trajectory, the set of admissable states $\mathcal{Z}_t$ of \eqref{eq_mathcalZt_case2} can be shrunken to $\tilde{\mathcal{Z}}_t$ according to the following heuristic:
%
\begin{algorithmic}[1]
\vspace{-0cm}\State \parbox[t]{\dimexpr\linewidth-\algorithmicindent}{Initialize: $\tilde{\mathcal{Z}}_t= \mathcal{Z}_t$.\strut}
\NoDo\NoThen
\vspace{0.05cm}\For{\parbox[t]{\dimexpr\linewidth-\algorithmicindent}{every $i_t\in\mathcal{I}$ such that the corresponding $\bm{z_t}\in\mathcal{Z}_t$ of \eqref{eq_mathcalZt_case2}:\strut}}
\vspace{-0.35cm}\State \parbox[t]{\dimexpr\linewidth-\algorithmicindent-\algorithmicindent}{Compute: $k_t^\text{opt}(i_t) = \left\{k_t: w_t^{0,\text{opt}}(i_t)=\max\{w_t^0\}~\text{s.t. corresponding}~\bm{z_t}\in\mathcal{Z}_t~\text{of \eqref{eq_mathcalZt_case2}} \right\}$.\strut}
\vspace{-0.cm}\State \parbox[t]{\dimexpr\linewidth-\algorithmicindent-\algorithmicindent}{Shrink: $\tilde{\mathcal{Z}}_t = \tilde{\mathcal{Z}}_t \backslash \left\{ \bm{z_t}: k_t > k_t^\text{opt}(i_t) \right\}$.\strut}
\EndFor
\end{algorithmic}
%
\end{proposition}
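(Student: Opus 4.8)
The plan is to prove that every state deleted by the heuristic is \emph{dominated} by a retained state and is therefore redundant for optimal trading. Call a state with identity $i'$, wealth $w'$, and trade count $k'$ \emph{dominated} by another with identity $i''$, wealth $w''$, and trade count $k''$ when $i'=i''$, $w''\ge w'$, and $k''\le k'$; here ``wealth'' refers throughout to the EUR coordinate $w_t^0$. I would fix a stage $t$ and an identity $a\in\mathcal{I}$, and write $S_a=\{\bm{z_t}\in\mathcal{Z}_t:i_t=a\}$ for the states of \eqref{eq_mathcalZt_case2} carrying this identity; by construction these are indexed by \emph{distinct} trade counts $k_t$, each being the wealth-maximiser of its $(a,k_t)$ class. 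Taking $k_t^\text{opt}(a)$ to be the \emph{smallest} trade count attaining the class-wide maximum wealth $w_t^{0,\text{opt}}(a)$, the heuristic discards precisely the $\bm{z_t}\in S_a$ with $k_t>k_t^\text{opt}(a)$.

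First I would verify that each discarded state is dominated by a retained one. Fix a discarded $\bm{z_t^B}\in S_a$ with trade count $k^B>k_t^\text{opt}(a)$; by the argmax definition its wealth satisfies $w^B\le w_t^{0,\text{opt}}(a)$. The retained state $\bm{z_t^A}\in S_a$ with trade count $k^A=k_t^\text{opt}(a)$ and wealth $w^A=w_t^{0,\text{opt}}(a)$ then has $i^A=i^B=a$, $w^A\ge w^B$, and $k^A<k^B$, so $\bm{z_t^A}$ dominates $\bm{z_t^B}$. It remains to show that domination implies redundancy: that any continuation of $\bm{z_t^B}$ is reproducible from $\bm{z_t^A}$ with at least equal terminal wealth. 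Feasibility of the reproduction rests on the budget coordinate, since a continuation of $\bm{z_t^B}$ expends some $\delta\ge 0$ further trades with $k^B+\delta\le K$, and applying the \emph{identical} control sequence $u_t,\dots,u_{N_t-1}$ to $\bm{z_t^A}$ is admissible because $k^A+\delta<k^B+\delta\le K$; the induced sequence of identities agrees since both states start from $a$, and---Case~2 imposing no waiting constraint---future feasibility is governed by the trade budget alone. The terminal-wealth comparison is exactly the monotonicity asserted in Remark \ref{remark1}: for a fixed starting identity and a fixed control sequence, terminal wealth is nondecreasing in the wealth held at stage $t$, so the $\bm{z_t^A}$-continuation ends with wealth at least that of the $\bm{z_t^B}$-continuation.

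Next I would lift this stepwise dominance to global optimality by induction on $t$. Fixing any optimal trajectory $\{\bm{z_t^\star}\}_{t=0}^{N_t}$, the invariant to maintain is that $\tilde{\mathcal{Z}}_t$ always contains a state dominating $\bm{z_t^\star}$ from which the residual optimal controls $u_t,\dots,u_{N_t-1}$ remain admissible. If $\bm{z_t^\star}$ survives the shrink the invariant is immediate; if it is deleted, the state $\bm{z_t^A}$ of the previous paragraph serves as the witness. Branching this witness under the optimal control $u_t$ lands, by \eqref{eq_mathcalZt_case2}, in a state dominated by some member of $\mathcal{Z}_{t+1}$, which is in turn dominated---after the next shrink---by some member of $\tilde{\mathcal{Z}}_{t+1}$ (the relevant class wealth-maximiser, whose trade count is $\le k_{t+1}^\text{opt}$), re-establishing the invariant. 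At $t=N_t$ the surviving witness carries terminal wealth at least $w_{N_t}^{0,\star}$, so the backward reconstruction \eqref{eq_zstar_reconstruction} performed over the shrunken sets $\tilde{\mathcal{Z}}_t$ still recovers an optimal investment trajectory, and optimality is not compromised.

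The step I expect to be the main obstacle is the wealth-monotonicity invoked in the reproduction argument. Under the fixed-plus-proportional transaction costs and the integer share-rounding of \eqref{eq:OP_zt(3)_mntilde}--\eqref{eq:OP_zt(6)_mnbar}, the implication ``more wealth at stage $t$ under identical controls yields at least as much terminal wealth'' is not self-evident at the level of a single transition: additional input cash purchases weakly more shares, yet both the cash/share split that enters a later sale and the marked-to-market asset value depend on price moves, so a naive one-step monotonicity is delicate. I would handle this exactly as the paper already does for Case~1, adopting the stepwise wealth-monotonicity of Remark \ref{remark1} as the governing property of the transition model and invoking it verbatim; the genuinely new ingredient for Case~2 is then only the budget bookkeeping $k^A+\delta\le K$, which the dominance argument supplies directly.
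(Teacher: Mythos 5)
Your proposal is correct and follows essentially the same route as the paper's own (one-sentence) proof: the retained state with trade count $k_t^\text{opt}(i_t)$ dominates every discarded state carrying the same $i_t$ --- at least as much wealth $w_t^0$ and a strictly smaller trade count, hence an admissible continuation set that is larger by at least the option of one additional trade --- so the shrinking cannot destroy optimality. Your additions (the explicit budget bookkeeping $k^A+\delta\le K$, the induction lifting stepwise dominance to global optimality, and the honest caveat that wealth-monotonicity under integer share-rounding and fixed costs is inherited from Remark \ref{remark1} rather than re-proved) only make explicit what the paper asserts in a single line.
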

\begin{proof}
W.l.o.g., suppose that for a given $i_t=i\in\mathcal{I}$ we have determined $k_t^\text{opt}(i_t)$. Let the associated state vector be denoted by $\bm{z_t^\text{opt}(i_t)}$. Then, we can discard all $\bm{z_t}$ with $i_t=i$ and $k_t>k_t^\text{opt}(i_t)$, since $w_{t+\tau}^{0,\text{opt}}(i_t)\geq w_{t+\tau}^0,~\forall \tau\geq 0$, and the admissable set for state $\bm{z_t^\text{opt}(i_t)}$ is thus larger by at least the option of one additional trade, in comparison to the admissable set corresponding to all $\bm{z_t}\in\mathcal{Z}_t$ of \eqref{eq_mathcalZt_case2} with $i_t=i$ and $k_t>k_t^\text{opt}(i_t)$. 
\end{proof}

Note that the total number of states, $N_z(N_t)$, cannot be predicted precisely as before. It is now data-dependent instead. Quantitative implications are reported in Section \ref{sec_NumEx}. 

\subsection{Case 3: Waiting period after every trade until the next trade \label{subsec_case3_NoDiversificationCstrt}}

We constrain the investment trajectory to waiting of at least a specific time period $D$ after every executed trade until the next trade. The set of admissable states is consequently generated as 
\begin{equation}
\mathcal{Z}_t = \left\{ \bm{z_t}: \max_{j_t\in\mathcal{J}_{t-1}^{z_t}} \{ w_t^0\},~\forall d_t<D~\text{ and unique}, \text{and}~\forall i_t=\in\mathcal{I} \right\}.\label{eq_mathcalZt_case3}
\end{equation} 
As a result, the resulting transition graph holds a total of 
$ N_z(t) = 1 + \sum_{l=1}^t  (N_c+N_a-1)\text{min}(l,D) + 1 + \text{min}\left( \text{max}(0,l-D),D-1\right) $ states. The reconstruction of optimal investment decisions is similar to \eqref{eq_zstar_reconstruction}. 

Similarly to Section \ref{subsec_case2_NoDiversificationCstrt}, the total number of states, $N_z(t)$, quickly reaches large numbers. We therefore also introduce a heuristic to reduce $N_z(t)$ while not compromising optimality of the solution. 
\begin{proposition}
While not compromising the finding of an optimal investment trajectory, the set of admissable states $\mathcal{Z}_t$ of \eqref{eq_mathcalZt_case3} can be shrunken to $\bar{\mathcal{Z}}_t$ according to the following heuristic:
%
\begin{algorithmic}[1]
\vspace{-0cm}\State \parbox[t]{\dimexpr\linewidth-\algorithmicindent}{Initialize: $\bar{\mathcal{Z}}_t= \mathcal{Z}_t$.\strut}
\NoDo\NoThen
\vspace{0.05cm}\For{\parbox[t]{\dimexpr\linewidth-\algorithmicindent}{every $i_t\in\mathcal{I}$ such that the corresponding $\bm{z_t}\in\mathcal{Z}_t$ of \eqref{eq_mathcalZt_case3}:\strut}}
\vspace{-0.35cm}\State \parbox[t]{\dimexpr\linewidth-\algorithmicindent-\algorithmicindent}{Compute: $d_t^\text{opt}(i_t) = \left\{d_t: w_t^{0,\text{opt}}(i_t):=\max\{w_t^0\}~\text{s.t. corresponding}~\bm{z_t}\in\mathcal{Z}_t~\text{of \eqref{eq_mathcalZt_case3}} \right\}$.\strut}
\vspace{-0.cm}\State \parbox[t]{\dimexpr\linewidth-\algorithmicindent-\algorithmicindent}{Shrink: $\bar{\mathcal{Z}}_t = \bar{\mathcal{Z}}_t \backslash \left\{ \bm{z_t}: 0<d_t < d_t^\text{opt}(i_t) \right\}$.\strut}
\EndFor
\end{algorithmic}
%
\end{proposition}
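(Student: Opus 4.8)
Adapting the dominance argument that established the Case~2 heuristic (Section~\ref{subsec_case2_NoDiversificationCstrt}), I would simply replace the monotone trade counter $k_t$ by the waiting counter $d_t$. Fix an asset $i_t=i\in\mathcal{I}$ and, among all $\bm{z_t}\in\mathcal{Z}_t$ of \eqref{eq_mathcalZt_case3} carrying this $i_t$, let $\bm{z_t^\text{opt}(i_t)}$ denote the wealth-maximizing one, so that it holds wealth $w_t^{0,\text{opt}}(i_t)$ and waiting counter $d_t^\text{opt}(i_t)$. The goal is to establish that every state $\bm{z_t}$ with $i_t=i$ and $0<d_t<d_t^\text{opt}(i_t)$ is weakly dominated by $\bm{z_t^\text{opt}(i_t)}$, in the sense that $w_{t+\tau}^{0,\text{opt}}(i_t)\geq w_{t+\tau}^0$ for all $\tau\geq 0$; any such state may then be deleted from $\bar{\mathcal{Z}}_t$ without discarding an optimal investment trajectory.

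Dominance would rest on two ingredients. First, by the definition of $d_t^\text{opt}(i_t)$ the retained state enjoys the immediate wealth advantage $w_t^{0,\text{opt}}(i_t)\geq w_t^0$. Second, I need a flexibility advantage, namely that the set of admissible continuations of $\bm{z_t^\text{opt}(i_t)}$ contains that of the dominated state. Here I would exploit the update $\tilde{d}_t$: within the waiting regime $d_t\in\{1,\dots,D-1\}$ the counter merely increments until it overflows, so from waiting level $d$ the next trade-permitting instant is reached after $D-d$ no-trade steps, a number that decreases in $d$. A larger level $d_t^\text{opt}(i_t)>d_t$ therefore becomes trade-enabled no later than the dominated state, and by idling whenever the dominated trajectory idles, every future trade schedule feasible from $\bm{z_t}$ would also be feasible from $\bm{z_t^\text{opt}(i_t)}$. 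Combining this continuation-set inclusion with the monotonicity of attainable terminal wealth in the starting cash/asset position—the same monotonicity already invoked in the Case~2 proof—then yields $w_{t+\tau}^{0,\text{opt}}(i_t)\geq w_{t+\tau}^0$ for all $\tau\geq 0$.

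The step I expect to be the main obstacle is handling rigorously the cyclic, overflow-driven nature of $d_t$, together with the deliberate exclusion of $d_t=0$ from the deletion set. The level $d_t=0$ is the distinguished free-to-trade state produced by the overflow; it admits a strictly richer continuation set than any positive waiting level and can therefore never be dominated by $\bm{z_t^\text{opt}(i_t)}$, so it must be retained. Symmetrically, states with $d_t>d_t^\text{opt}(i_t)$ are more flexible than $\bm{z_t^\text{opt}(i_t)}$ yet carry strictly smaller wealth, so neither dominates the other and both are kept; only the genuinely inferior band $0<d_t<d_t^\text{opt}(i_t)$, which is simultaneously poorer in wealth and more constrained in the waiting regime, may be removed. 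The delicate part is to verify the continuation-set inclusion carefully, ensuring that the admissible trade instants of the dominating state really do cover those of the dominated one despite the cyclic phase of $d_t$; once this is secured, optimality of the shrunken set $\bar{\mathcal{Z}}_t$ follows exactly as in the Case~2 argument.
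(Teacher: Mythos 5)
Your proposal is correct and takes essentially the same route as the paper's proof: it rests on the same two dominance ingredients, namely the wealth advantage $w_t^{0,\text{opt}}(i_t)\geq w_t^0$ by definition of $\bm{z_t^\text{opt}(i_t)}$, and the continuation-set inclusion arising because a larger positive waiting counter is closer (by at least one sampling time) to the next admissable trade, which together give $w_{t+\tau}^{0,\text{opt}}(i_t)\geq w_{t+\tau}^0$ for all $\tau\geq 0$. Your explicit idle-and-mimic construction and your discussion of why $d_t=0$ and $d_t>d_t^\text{opt}(i_t)$ must be retained simply spell out what the paper's one-sentence argument leaves implicit.
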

\begin{proof}
W.l.o.g., suppose for a given $i_t=i\in\mathcal{I}$ we have determined $d_t^\text{opt}(i_t)$. Let the associated state vector be denoted by $\bm{z_t^\text{opt}(i_t)}$. Then, we can discard all $\bm{z_t}$ with $i_t=i$ and $0<d_t<d_t^\text{opt}(i_t)$, since $w_{t+\tau}^{0,\text{opt}}(i_t)\geq w_{t+\tau}^0,~\forall \tau\geq 0$, and the admissable set for state $\bm{z_t^\text{opt}(i_t)}$ is larger by being closer to a potential next trade by at least one trading sampling time, in comparison to the admissable set corresponding to all $\bm{z_t}\in\mathcal{Z}_t$ of \eqref{eq_mathcalZt_case3} with $i_t=i$ and $0<d_t<d_t^\text{opt}(i_t)$.
\end{proof}

Similarly to Section \ref{subsec_case2_NoDiversificationCstrt}, the total number of states, $N_z(N_t)$, cannot be predicted precisely since it is data-dependent. Quantitative results are reported in Section \ref{sec_NumEx}. This heuristic significantly reduces computational complexity in practice. 

\section{Multi-stage transition dynamics optimization with a diversification constraint\label{sec_multi_stage_opt_with_diversification}}

In portfolio optimization the introduction of diversification constraints is regarded as a measure to reduce drawdown risk. For our purpose of analysis of historical optimal trading we first divide the initial wealth $m_0$ into $Q$ parts of equal proportion. Then, we impose constraints on each of the corresponding $Q$ investment trajectories. In the unconstrained case, all $Q$ trajectories would coincide. In the constrained case, we distinguish between i) constraints \emph{between} multiple investment trajectories: diversification at only the initial time, diversification permitted at all times, asynchronous trading and synchronous trading, and ii) constraints \emph{along} any specific investment trajectory: unconstrained trading frequency, at most $K$ trades along the investment trajectory, and the enforcement of a waiting period after each executed trade. 

We define a diversification constraint at a specific time $t$ such that each of the states of the $Q$-trajectories, $\bm{z_t}\in\mathcal{Z}_t^{(q)}$, $\forall q=0,\dots,Q-1$, must be invested differently. Thus, each asset identification number $i_t^{(q)}$ must be different $\forall t=0,1,\dots,N_t$, $\forall q=0,1,\dots,Q-1$. 

We define the sets of admissable states $\mathcal{Z}_t^{(q)}$, $\forall t=0,1,\dots,N_t$ and $\forall q=0,1,\dots,Q-1$, sequentially and ordered according to optimality. Thus, $\mathcal{Z}_t^{(1)}$, $\forall t=0,1,\dots,N_t$ is constructed accounting only for the optimal investment trajectory associated with $\mathcal{Z}_t^{(0)}$, i.e., the set $\mathcal{Z}_t^{(0),\star},~\forall t=0,1,\dots,N_t$, whereas $\mathcal{Z}_t^{(q)}$ is constructed accounting for all of the optimal investment trajectories associated with $\mathcal{Z}_t^{(0),\star},~\mathcal{Z}_t^{(1),\star},\dots,\mathcal{Z}_t^{(q-1),\star}$. Here, $\mathcal{Z}_t^{(q),\star},~\forall q=0,1,\dots,Q-1$ denotes the set of states at each time $t$ that  result from the reconstruction of optimal investment decisions along the optimal investment trajectory according to \eqref{eq_zstar_reconstruction}. Thus, our methodology aims at being maximally invested in the investment trajectories ordered according to optimality.

\subsection{Q trajectories, diversification for a subset of times and asynchronous trading}

We define the subset of trading sampling times as $\mathcal{T}^{(q)}\subseteq \{0,1,\dots,N_t\},~\forall q=0,1,\dots,Q-1$. For enforcement of diversification in form of $Q$ trajectories, diversification for any subset of trading times and asynchronous trading, the sets of admissable states are initialized as
\begin{equation}
\begin{aligned}
\mathcal{Z}_0^{(q)} = \left\{\bm{z_0}:\bm{z_0} = \begin{bmatrix} 0 & 0 & 0 & m_0^{0,q} & 0 & m_0^{0,q} & 0 & 0 \end{bmatrix} \right\},~\forall q=0,1,\dots,Q-1.
\end{aligned}\label{eq_def_Z0q_asyn}
\end{equation}

For unconstrained trading frequency along an investment trajectory and $t>0$, the sets of admissable states are thus generated according to
\begin{equation}
\begin{aligned}
\mathcal{Z}_t^{(q)} & = \biggg\{\bm{z_t}: \max_{j_t\in\mathcal{J}_{t-1}^{z_t}} \{w_t^0\},~\forall i_t\in\mathcal{I}~\text{if}~t\notin\mathcal{T}^{(q)},~\text{or}~...\\
& \hspace{0cm} \forall i_t\in\mathcal{I}\backslash\left( \cup \left\{ i_t: i_t = i_t^{(r),\star}, \bm{z_t^{(r),\star}}\in\mathcal{Z}_t^{(r),\star} \right\}_{r=0}^{q-1} \right)~\text{if}~ t\in\mathcal{T}^{(q)}\biggg\},
\end{aligned}
\label{eq_Ztq_case1_asyn}
\end{equation} 
with $q=0,1,\dots,Q-1$, and where $\bm{z_t^{(r),\star}}\in\mathcal{Z}_t^{(r)}$ denotes the optimal state at time $t$ associated with investment trajectory $r$. 

For the case of at most $K$ admissable trades along any investment trajectory and $t>0$, the sets of admissable states are generated according to 
\begin{equation}
\begin{aligned}
\mathcal{Z}_t^{(q)} &= \biggg\{\bm{z_t}: \max_{j_t\in\mathcal{J}_{t-1}^{z_t}} \{w_t^0\},~\forall k_t<K~\text{ and unique},~\text{and}~\forall i_t\in\mathcal{I}~\text{if}~t\notin\mathcal{T}^{(q)},~\text{or}~...\\
& \hspace{-0.5cm} \forall k_t<K~\text{ and unique},~\text{and}~\forall i_t\in\mathcal{I}\backslash \left( \cup \left\{ i_t: i_t = i_t^{(r),\star}, \bm{z_t^{(r),\star}}\in\mathcal{Z}_t^{(r),\star} \right\}_{r=0}^{q-1} \right)~\text{if}~ t\in\mathcal{T}^{(q)} \biggg\},
\end{aligned}
\label{eq_Ztq_case2_asyn}
\end{equation} 
with $q=0,1,\dots,Q-1$.

For the case of enforcing a waiting period after each executed trade along any investment trajectory and $t>0$, the sets of admissable states are generated according to
\begin{equation}
\begin{aligned}
\mathcal{Z}_t^{(q)} &= \biggg\{\bm{z_t}: \max_{j_t\in\mathcal{J}_{t-1}^{z_t}} \{w_t^0\},~\forall d_t<D~\text{ and unique},~\text{and}~\forall i_t\in\mathcal{I}~\text{if}~t\notin\mathcal{T}^{(q)},~\text{or}~...\\
& \hspace{-0.5cm} \forall d_t<D~\text{ and unique},~\text{and}~\forall i_t\in\mathcal{I}\backslash \left( \cup \left\{ i_t: i_t = i_t^{(r),\star}, \bm{z_t^{(r),\star}}\in\mathcal{Z}_t^{(r),\star} \right\}_{r=0}^{q-1} \right)~\text{if}~ t\in\mathcal{T}^{(q)} \biggg\},
\end{aligned}
\label{eq_Ztq_case3_asyn}
\end{equation} 
with $q=0,1,\dots,Q-1$.

\subsection{Q trajectories, diversification for all times and synchronous trading}

Let us define a subset of trading sampling times as $\mathcal{T}\subseteq \{0,1,\dots,N_t\}$. This subset may, for example, indicate the sampling times at which trades were executed along the optimal investment trajectory associated with $\mathcal{Z}_t^{(0),\star}$:
\[
\mathcal{T} = \{t:i_t^{(0),\star}\neq j_t^{(0),\star},~\bm{z_t^{(0),\star}}\in\mathcal{Z}_t^{(0),\star},~\forall t=1,\dots,N_t \}.
\]

The set of a admissable states is initialized as in \eqref{eq_def_Z0q_asyn}. Then, for an unconstrained trading frequency along an investment trajectory and $t>0$, the sets of admissable states are generated according  to 
\begin{equation}
\begin{aligned}
\mathcal{Z}_t^{(q)}  = \biggg\{\bm{z_t}:\hspace{0.1cm} & \bm{z_t}=\bm{z_{t-1}}~\text{if}~ t\notin\mathcal{T},~\text{or}~\dots\\
& \hspace{0cm} \bm{z_t}~\text{s.t.}~\max_{j_t\in\mathcal{J}_{t-1}^{z_t}} \{w_t^0\},~\forall i_t\in\mathcal{I}\backslash\left( \cup \left\{ i_t: i_t = i_t^{(r),\star}, \bm{z_t^{(r),\star}}\in\mathcal{Z}_t^{(r),\star} \right\}_{r=0}^{q-1} \right)~\text{if}~t\in\mathcal{T}
\biggg\},
\end{aligned}
\label{eq_Ztq_case1_syn}
\end{equation} 
with $q=0,1,\dots,Q-1$. 

The case of at most $K$ admissable trades along any investment trajectory as well as the case of enforcing a waiting period after each executed trade along any investment trajectory can then be defined analogously.

\subsection{Remarks and relevant quantities for interpretation\label{subsec_remarks_evalQuantities}}

Note that the presented framework can also be extended to analyze alternative optimization criteria such as, for example, determining a worst-case investment trajectory (\emph{pessimization}), or the tracking of a target return reference trajectory (\emph{index tracking}).

In order to interpret quantiative results in the following Section \ref{sec_NumEx}, we define the \emph{total return} (measured in percent) as $r_{N_t}^{\text{tot},(q)} = 100\frac{w_{N_t}^0-w_0^0}{w_0^0}, \quad \forall q \in \mathcal{Q}$. Similarly, we define the return at time $t$ as $r_{t}^{\text{tot},(q)},~\forall q\in\mathcal{Q}$. We further report the total number of conducted trades as $K_{N_t}^{\text{tot}}$. The minimal time-span between any two trades within time-frame $t\in\mathcal{N}_t=\{0,1,\dots,N_t\}$ shall be denoted by $D_{N_t}^\text{min}$. In addition, the average, minimal and maximal percentage gain per conducted non-currency asset-trade is of our interest. Stating the quantities with respect to our reference currency (EUR), we therefore first define the set
\begin{equation*}
\begin{aligned}
\Delta \mathcal{G}^{(q)} &= \bigg\{ 100 \frac{w_{\tau}^0 - w_{\eta}^{0} }{w_{\eta}^{0}} : \text{with}~\tau~\text{s.t.}~\tau = t-1,~\bar{i}\in\mathcal{I}_{N_a},~i_{t-1}\neq \bar{i},~i_t=\bar{i}, \\
&\text{with}~\eta~\text{s.t.}~\eta=t,~\bar{i}\in\mathcal{I}_{N_a}, ~i_t=\bar{i},~i_{t+1}\neq \bar{i},~\text{and}~\tau>\eta,~\bm{z_t}\in\mathcal{Z}_t^{(q),\star},~\forall t\in\mathcal{N}_t,~\forall q\in\mathcal{Q} \bigg\},
\end{aligned}
\end{equation*}
whereby $\bar{i}$ identifies an asset of interest. The average, minimum and maximum shall then be denoted by $\text{avg}(\Delta \mathcal{G}^{(q)})$, $\text{min}(\Delta \mathcal{G}^{(q)})$ and $\text{max}(\Delta \mathcal{G}^{(q)})$, respectively. The associated trading times are summarized in 
\begin{equation*}
\begin{aligned}
\Delta \mathcal{T}^{(q)} &= \bigg\{ \tau-\eta : \text{with}~\tau~\text{s.t.}~\tau = t-1,~\bar{i}\in\mathcal{I}_{N_a},~i_{t-1}\neq \bar{i}, ~i_t=\bar{i}, \\
&\text{with}~\eta~\text{s.t.}~\eta=t,\bar{i}\in\mathcal{I}_{N_a},~i_t=\bar{i},~i_{t+1}\neq \bar{i},~\text{and}~\tau>\eta,~\bm{z_t}\in\mathcal{Z}_t^{(q),\star},~\forall t\in\mathcal{N}_t,~\forall q\in\mathcal{Q} \bigg\},
\end{aligned}
\end{equation*}
with corresponding $\text{avg}(\Delta \mathcal{T}^{(q)})$, $\text{min}(\Delta \mathcal{T}^{(q)})$ and $\text{max}(\Delta \mathcal{T}^{(q)})$ defined accordingly.

Then, we can partition quantities of interest into two groups: overall performance measures and, secondly, quantities associated with non-currency asset holdings along an investment-optimal $q$-trajectory. We therefore compactly summarize results in evaluation vectors and matrices 
\begin{align}
\bm{e}^{(q)} &= \begin{bmatrix} r_{N_t}^{\text{tot},(q)} &  K_{N_t}^{\text{tot},(q)} & D_{N_t}^{\text{min},(q)} \end{bmatrix},\quad \forall q\in\mathcal{Q},\\
\bm{E}^{(q)} &= \begin{bmatrix}\text{avg}(\Delta \mathcal{G}^{(q)}) & \text{min}(\Delta \mathcal{G}^{(q)}) & \text{max}(\Delta \mathcal{G}^{(q)}) \\ \text{avg}(\Delta \mathcal{T}^{(q)}) & \text{min}(\Delta \mathcal{T}^{(q)}) & \text{max}(\Delta \mathcal{T}^{(q)}) \end{bmatrix},\quad \forall q\in\mathcal{Q}.
\end{align}

\section{Numerical examples\label{sec_NumEx}}

To quantitatively evaluate results, three numerical examples are reported. For all examples, a time horizon of one year is chosen. The sampling time is selected as one day. Adjusted closing prices of both foreign exchange rates and stock indices are retrieved from \texttt{finance.yahoo.com}. As a preprocessing step all non-currency assets are normalized to value $100$ in their corresponding currency at time $t=0$. 

The first example treats optimal trading of EUR, USD and the Nasdaq-100. This scenario is selected mainly to analyze currency effects. No diversification constraint is enforced such that we have $Q=1$. The second example treats optimal trading of 16 different currencies and 15 different non-currency assets. A diversification constraint is employed with $Q=3$. The third example compares the results for an exemplary downtrending and an uptrending stock.   

This section illustrates the effects of i) different transaction cost levels, and ii) various constraints on a posteriori optimal trading performance. 

\subsection{Example 1: EUR, USD and Nasdaq-100\label{subsec_ex1}}

\begin{table}
\begin{center}
\tbl{Example 1, see Section \ref{subsec_ex1}. Identification of the currencies and asset under consideration. The currency in which asset $i$ is traded is denoted by $c(i)$. The reference currency of Nasdaq-100 is USD.\label{tab1_ex1}}
{
\def\arraystretch{0.7}
\begin{tabular}{@{}lccc}\toprule
$i$ & \texttt{finance.yahoo}-symbol & Interpretation & $c(i)$ \\\colrule
$0$ & -- & EUR & $0$ \\[0.1cm]
$1$ & \texttt{EURUSD=x} & USD & $1$ \\[0.1cm]
$2$ & \texttt{\^~NDX} & Nasdaq-100 & $1$ \\[0.1cm]
\botrule
\end{tabular}}
\end{center}
\end{table}

\begin{table}
\begin{center}
\tbl{Summary of quantitative results of Example 1 from Section \ref{subsec_ex1}.\label{tab_ex1}}
{
\def\arraystretch{0.7}
\begin{tabular}{@{}lcccc}\toprule
 & $\epsilon=0$ & $\epsilon=0.5$
& $\epsilon=1$
& $\epsilon=2$ \\\colrule
Buy-and-Hold & $\begin{bmatrix} \bm{-0.2}&1&250 \end{bmatrix}$ & $\begin{bmatrix} \bm{-1.2}&1&250 \end{bmatrix}$ & $\begin{bmatrix} \bm{-2.2}&1&250 \end{bmatrix}$ & $\begin{bmatrix} \bm{-4.2}&1&250 \end{bmatrix}$ \\[0.2cm]
Unconstrained & $\begin{bmatrix} \bm{330.0}&165&1 \end{bmatrix} \atop \begin{bmatrix} 2.0&0.1&11.0\\1.8&1&5 \end{bmatrix}$  & $\begin{bmatrix} \bm{134.6}&59&1 \end{bmatrix} \atop \begin{bmatrix} 3.9&0.9&11.8\\5.5&1&21 \end{bmatrix}$ & $\begin{bmatrix} \bm{86.2}&31&1 \end{bmatrix} \atop \begin{bmatrix} 5.9&0.4&16.6\\12.2&1&45 \end{bmatrix}$ & $\begin{bmatrix} \bm{50.1}&14&1 \end{bmatrix} \atop \begin{bmatrix} 8.2&3.1&15.5\\20.0&3&45 \end{bmatrix}$  \\[0.7cm]
$\leq 12$ trades & $\begin{bmatrix} \bm{106.8}&12&6 \end{bmatrix} \atop \begin{bmatrix} 11.9&7.1&17.8\\23.7&11&45 \end{bmatrix}$ & $\begin{bmatrix} \bm{87.9}&12&6 \end{bmatrix} \atop \begin{bmatrix} 11.0&6.6&17.2\\23.7&11&45 \end{bmatrix}$ & $\begin{bmatrix} \bm{72.5}&12&5 \end{bmatrix} \atop \begin{bmatrix} 10.1&5.8&16.6\\23.8&11&45 \end{bmatrix}$ & $\begin{bmatrix} \bm{49.6}&12&1 \end{bmatrix} \atop \begin{bmatrix} 9.2&5.0&15.5\\23.7&11&45 \end{bmatrix}$  \\[0.7cm]
$\geq 10$ days waiting & $\begin{bmatrix} \bm{114.2}&18&10 \end{bmatrix} \atop \begin{bmatrix} 9.1&2.7&16.6\\15.3&10&21 \end{bmatrix}$ & $\begin{bmatrix} \bm{84.8}&17&10 \end{bmatrix} \atop \begin{bmatrix} 9.2&2.7&16.0\\ 16.1&11&21 \end{bmatrix}$ & $\begin{bmatrix} \bm{68.4}&13&10 \end{bmatrix} \atop \begin{bmatrix} 9.8&6.0&15.4\\ 22.3&11&45 \end{bmatrix}$ & $\begin{bmatrix} \bm{46.5}&10&11 \end{bmatrix} \atop \begin{bmatrix} 9.3&5.9&15.5\\ 25.4&12&45 \end{bmatrix}$ \\[0.55cm]
\botrule
\end{tabular}}
\end{center}
\end{table}

The results for numerical example 1 are summarized in Table \ref{tab_ex1}. Different levels of transaction costs with variable proportional cost but constant fixed cost are considered. In Table \ref{tab_ex1} the evaluation quantities $\bm{e^{(0)}}$ and $\bm{E^{(0)}}$ are reported for the different trading strategies. For the Buy-and-Hold strategy, only $\bm{e^{(0)}}$ is reported. We assume proportional costs (indicated in $\%$) to be the same for buying and selling for both foreign exchange and asset trading, i.e., $\epsilon = \epsilon_\text{buy}=\epsilon_\text{sell}$. For $\epsilon=0$, we also set $\beta=0$. For all other cases, we set $\beta=50$. Total returns ($r_{N_t}^{\text{tot},(q)}$) are printed bold for emphasis. The time-span of interest is August 5, 2015 until August 3, 2016, and comprises 251 potential trading days. 

Several observations can be made with respect to the results of Table \ref{tab_ex1}. First, eventhough only two currencies, EUR and USD, i.e., $i\in\mathcal{I}_{N_c}=\{0,1\}$, and one non-currency asset, i.e., $i\in\mathcal{I}_{N_a}=\{2\}$, are traded long-only, remarkable profits can be earned when optimally trading a posteriori. Even in case of (high) transaction costs with a proportional rate of $2\%$, the profits significantly outperform a one-year Buy-and-Hold strategy. Second, the influence of different levels of transaction costs is impressive. This holds specifically for unconstrained trading with respect to returns, optimal trading frequency and percentage gains (average, minimum and maximum) upon which the non-currency asset is traded. Third, while the total return drops with increasing transaction cost levels, the remaining evaluation quantities remain approximately constant for the $K$-trades strategy (here $K=12$, i.e., 12 trades per year or one per month). Fourth, the results associated with the percentage gains upon which the non-currency asset is traded were unexpected. Intuitively, they were thought to be higher. The same holds for optimal time periods between any two trades. Results from Example 1 encourage frequent trading. For example, for the case with a waiting constraint, trading is encouraged upon percentage gains of on average slightly less than $10\%$ for all four levels of transaction costs.

\newlength\figureheight
\newlength\figurewidth
\setlength\figureheight{5cm}
\setlength\figurewidth{15cm}
\begin{figure}
\centering
\input{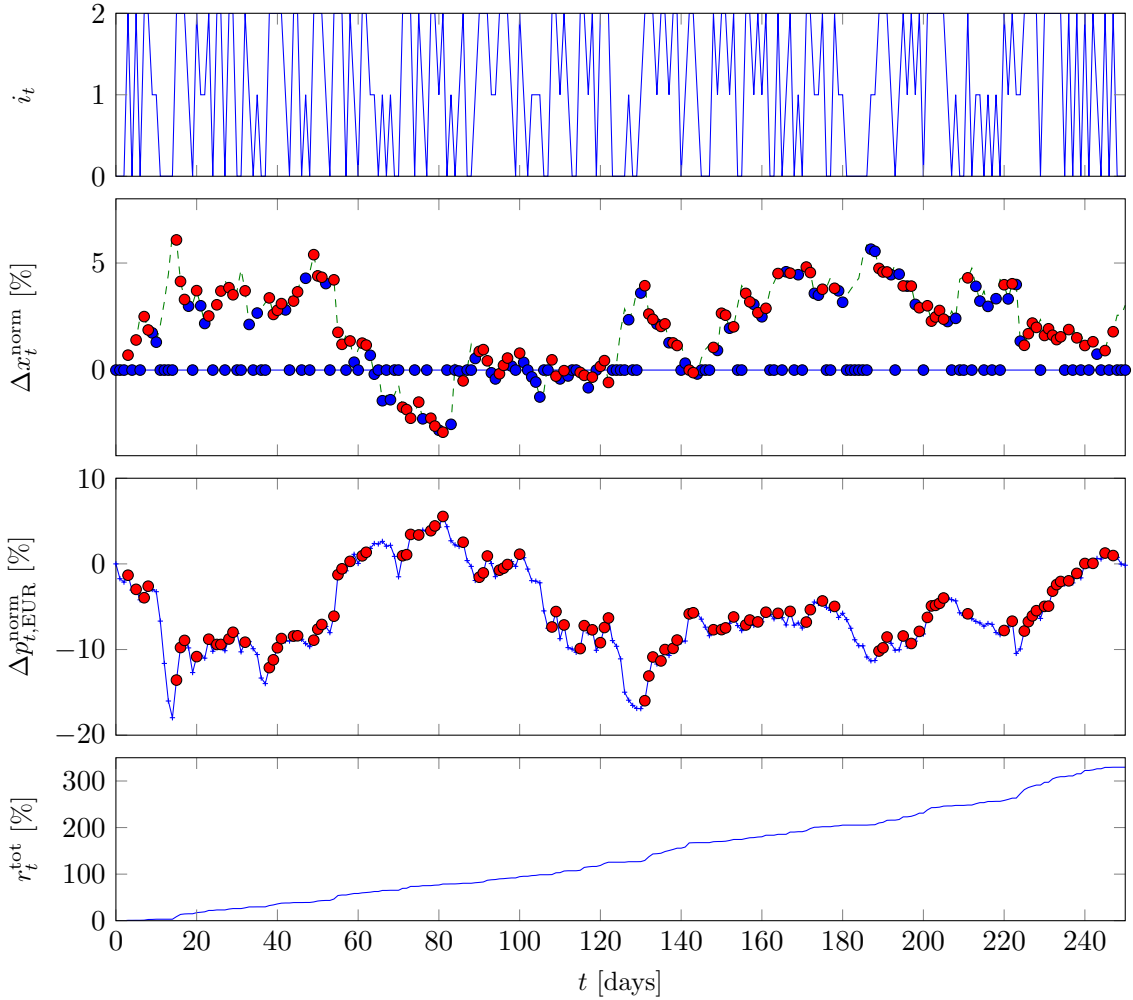}
\caption{Example 1, the unconstrained trading case in the absence of any transaction costs. See also Table \ref{tab_ex1}.}
\label{fig:ex1_fig1}
\end{figure}

Figure \ref{fig:ex1_fig1} further visualizes results. In order to compactly display multiple foreign exchange rates, we normalize w.r.t. the initial value at $t=0$, see the subplot with label $\Delta x_t^\text{norm}$. For reference currency EUR, we set $\Delta x_t^\text{norm}=0,~\forall t=0,1,\dots,N_t$. Analogously, we normalize non-currency prices and additionally take currency effects into account by first converting prices to currency EUR, see the subplot with label $\Delta p_{t,\text{EUR}}^\text{norm}$. For a specific optimal investment trajectory, at every time $t$, an investment in exactly one currency or non-currency asset is taken. Being invested in a non-currency asset is indicated by the red balls in Figure \ref{fig:ex1_fig1}. Since non-currency assets are associated with a specific currency we also label them accordingly with red balls. In contrast, an explicit investment in a currency is emphasized by blue balls.

It is striking that despite an absence of clear trends in both the EURUSD-foreign exchange rate and the Nasdaq-100 stock index, significant profits can be made when optimally trading--even when employing a \emph{long-only} strategy. The largest increases in return rates in currency EUR are achieved when the asset is increasing in value while the foreign exchange rate with reference Euro is decreasing. Investments in USD are optimal when the EURUSD-foreign exchange rate is trending down and the Nasdaq-100 is decreasing likewise. Investments in EUR are in general optimal when the EURUSD-foreign exchange rate is trending up and the Nasdaq-100 is trending down.  

\subsection{Example 2: Global investing and including a diversification constraint\label{subsec_ex2}}

\begin{table}
\begin{center}
\tbl{Example 2, see Section \ref{subsec_ex2}. Identification of 16 currencies and 15 assets. Each currency is associated with a foreign exchange rate with respect to EUR. The currency in which an asset $i$ is traded is denoted by $c(i)$.\label{tab1_ex2}}
{
\def\arraystretch{0.7}
\begin{tabular}{@{}lccc}\toprule
$i$ & \texttt{finance.yahoo}-symbol & Interpretation & $c(i)$ \\\colrule
$0$ & -- & EUR & $0$ \\[0.1cm]
$1$ & \texttt{EURUSD=x} & USD & $1$ \\[0.1cm]
$2$ & \texttt{EURJPY=x} & JPY & $2$ \\[0.1cm]
$3$ & \texttt{EURGBP=x} & GBP & $3$ \\[0.1cm]
$4$ & \texttt{EURCHF=x} & CHF & $4$ \\[0.1cm]
$5$ & \texttt{EURCNY=x} & CNY & $5$ \\[0.1cm]
$6$ & \texttt{EURDKK=x} & DKK & $6$ \\[0.1cm]
$7$ & \texttt{EURHKD=x} & HKD & $7$ \\[0.1cm]
$8$ & \texttt{EURNOK=x} & NOK & $8$ \\[0.1cm]
$9$ & \texttt{EURRUB=x} & RUB & $9$ \\[0.1cm]
$10$ & \texttt{EURBRL=x} & BRL & $10$ \\[0.1cm]
$11$ & \texttt{EURAUD=x} & AUD & $11$ \\[0.1cm]
$12$ & \texttt{EURCAD=x} & CAD & $12$ \\[0.1cm]
$13$ & \texttt{EURTRY=x} & TRY & $13$ \\[0.1cm]
$14$ & \texttt{EURZAR=x} & ZAR & $14$ \\[0.1cm]
$15$ & \texttt{EURINR=x} & INR & $15$ \\[0.1cm]
$16$ & \texttt{\^~GDAXI} & DAX (GER) & $0$ \\[0.1cm]
$17$ & \texttt{FTSEMIB.MI} & FTSEMIB (ITA) & $0$ \\[0.1cm]
$18$ & \texttt{\^~OSEAX} & OSEAX (NOR) & $8$ \\[0.1cm]
$19$ & \texttt{CSSMI.SW} & SMI (SUI) & $4$ \\[0.1cm]
$20$ & \texttt{\^~NDX} & Nasdaq-100 (USA) & $1$ \\[0.1cm]
$21$ & \texttt{\^~GSPC} & S\&P 500 (USA) & $1$ \\[0.1cm]
$22$ & \texttt{\^~N225} & NIKKEI 225 (JPN) & $2$ \\[0.1cm]
$23$ & \texttt{\^~HSI} & Hang-Seng (HKG) & $7$ \\[0.1cm]
$24$ & \texttt{\^~BVSP} & IBOVESPA (BRA) & $10$ \\[0.1cm]
$25$ & \texttt{\^~AORD} & All Ordinaries (AUS) & $11$ \\[0.1cm]
$26$ & \texttt{\^~GSPTSE} & S\&P/TSX (CAN) & $12$ \\[0.1cm]
$27$ & \texttt{AFS.PA} & FTSE/JSE (RSA) & $14$ \\[0.1cm]
$28$ & \texttt{RUS.PA} & Dow Jones Russia (RUS) & $0$ \\[0.1cm]
$29$ & \texttt{INR.PA} & MSCI India (IND) & $0$ \\[0.1cm]
$30$ & \texttt{TUR.PA} & Dow Jones Turkey (TUR) & $0$ \\[0.1cm]
\botrule
\end{tabular}}
\end{center}
\end{table}

\begin{table}
\begin{center}
\tbl{Summary of quantitative results of Example 2 for the first case: time-asynchronous trading with diversification for all times. \label{tab2_ex2}}
{
\def\arraystretch{0.7}
\begin{tabular}{@{}llcccc}\toprule
& & $\epsilon=0$ & $\epsilon=0.5$ & $\epsilon=1$ & $\epsilon=2$ \\\colrule
\multirow{4}{*}{\begin{turn}{90} \hspace{-2cm} \large{$q=0$} \end{turn}} & 
Buy-and-Hold & $\begin{bmatrix} \bm{18.2}&1&199 \end{bmatrix}$ & $\begin{bmatrix} \bm{17.0}&1&199 \end{bmatrix}$ & $\begin{bmatrix} \bm{15.9}&1&199 \end{bmatrix}$ & $\begin{bmatrix} \bm{13.6}&1&199 \end{bmatrix}$ \\[0.2cm]
& Unconstrained & $\begin{bmatrix} \bm{17450.5}&184&1 \end{bmatrix} \atop \begin{bmatrix} 3.3&0.1&19.5\\1.1&1&3 \end{bmatrix}$  & $\begin{bmatrix} \bm{3092.0}&126&1 \end{bmatrix} \atop \begin{bmatrix} 4.3&0.2&21.9\\1.7&1&5 \end{bmatrix}$ & $\begin{bmatrix} \bm{1099.5}&95&1 \end{bmatrix} \atop \begin{bmatrix} 5.9&0.7&24.7\\2.7&1&8 \end{bmatrix}$ & $\begin{bmatrix} \bm{354.5}&50&1 \end{bmatrix} \atop \begin{bmatrix} 8.6&1.8&26.2\\5.0&1&21 \end{bmatrix}$  \\[0.7cm]
& $\leq 12$ trades & $\begin{bmatrix} \bm{453.9}&12&3 \end{bmatrix} \atop \begin{bmatrix} 19.0&5.6&56.7\\13.8&3&55 \end{bmatrix}$ & $\begin{bmatrix} \bm{377.9}&12&3 \end{bmatrix} \atop \begin{bmatrix} 12.0&7.2&17.8\\22.5&11&45 \end{bmatrix}$ & $\begin{bmatrix} \bm{326.4}&12&2 \end{bmatrix} \atop \begin{bmatrix} 20.8&5.4&54\\18.4&3&49 \end{bmatrix}$ & $\begin{bmatrix} \bm{258.7}&12&2 \end{bmatrix} \atop \begin{bmatrix} 22.3&5.4&50.1\\19.0&3&46 \end{bmatrix}$  \\[0.7cm]
& $\geq 10$ days waiting & $\begin{bmatrix} \bm{440.8}&17&10 \end{bmatrix} \atop \begin{bmatrix} 12.9&3.8&33.5\\11.7&5&16 \end{bmatrix}$ & $\begin{bmatrix} \bm{341.2}&16&10 \end{bmatrix} \atop \begin{bmatrix} 11.5&2.1&31.4\\ 11.7&5&16 \end{bmatrix}$ & $\begin{bmatrix} \bm{269.3}&15&10 \end{bmatrix} \atop \begin{bmatrix} 11.1&1.8&29.5\\ 12.4&5&16 \end{bmatrix}$ & $\begin{bmatrix} \bm{186.6}&12&10 \end{bmatrix} \atop \begin{bmatrix} 17.9&12.1&26.7\\ 18.5&11&36 \end{bmatrix}$ \\[0.55cm]
\colrule
\multirow{4}{*}{\begin{turn}{90} \hspace{-2cm} \large{$q=1$} \end{turn}} & 
Buy-and-Hold & $\begin{bmatrix} \bm{3.3}&1&199 \end{bmatrix}$ & $\begin{bmatrix} \bm{2.8}&1&199 \end{bmatrix}$ & $\begin{bmatrix} \bm{2.3}&1&199 \end{bmatrix}$ & $\begin{bmatrix} \bm{1.3}&1&199 \end{bmatrix}$ \\[0.2cm]
& Unconstrained & $\begin{bmatrix} \bm{3671.6}&190&1 \end{bmatrix} \atop \begin{bmatrix} 2.4&0.02&7.5\\1.0&1&2 \end{bmatrix}$  & $\begin{bmatrix} \bm{769.0}&123&1 \end{bmatrix} \atop \begin{bmatrix} 3.3&0.4&9.0\\1.8&1&6 \end{bmatrix}$ & $\begin{bmatrix} \bm{378.5}&76&1 \end{bmatrix} \atop \begin{bmatrix} 4.8&1.2&9.0\\3.3&1&9 \end{bmatrix}$ & $\begin{bmatrix} \bm{173.2}&38&1 \end{bmatrix} \atop \begin{bmatrix} 8.0&2.8&22.5\\7.0&2&19 \end{bmatrix}$  \\[0.7cm]
& $\leq 12$ trades & $\begin{bmatrix} \bm{227.0}&12&1 \end{bmatrix} \atop \begin{bmatrix} 12.8&0.5&36.0\\15.5&1&43 \end{bmatrix}$ & $\begin{bmatrix} \bm{212.3}&12&1 \end{bmatrix} \atop \begin{bmatrix} 14.5&2.6&34.5\\16.2&1&43 \end{bmatrix}$ & $\begin{bmatrix} \bm{173.0}&12&4 \end{bmatrix} \atop \begin{bmatrix} 13.8&2.6&33.2\\18.3&7&43 \end{bmatrix}$ & $\begin{bmatrix} \bm{129.6}&12&3 \end{bmatrix} \atop \begin{bmatrix} 18.7&9.5&30.5\\23.7&7&43 \end{bmatrix}$  \\[0.7cm]
& $\geq 10$ days waiting & $\begin{bmatrix} \bm{267.5}&17&10 \end{bmatrix} \atop \begin{bmatrix} 10.4&2.0&18.4\\11.2&1&15 \end{bmatrix}$ & $\begin{bmatrix} \bm{200.5}&17&10 \end{bmatrix} \atop \begin{bmatrix} 9.8&3.8&17.2\\ 11.5&10&14 \end{bmatrix}$ & $\begin{bmatrix} \bm{153.2}&17&10 \end{bmatrix} \atop \begin{bmatrix} 9.2&1.5&14.5\\ 11.0&8&14 \end{bmatrix}$ & $\begin{bmatrix} \bm{113.3}&15&10 \end{bmatrix} \atop \begin{bmatrix} 9.1&1.2&22.1\\ 14.3&6&33 \end{bmatrix}$ \\[0.55cm]
\colrule
\multirow{4}{*}{\begin{turn}{90} \hspace{-2cm} \large{$q=2$} \end{turn}} & 
Buy-and-Hold & $\begin{bmatrix} \bm{1.7}&1&199 \end{bmatrix}$ & $\begin{bmatrix} \bm{0.6}&1&199 \end{bmatrix}$ & $\begin{bmatrix} \bm{-0.4}&1&199 \end{bmatrix}$ & $\begin{bmatrix} \bm{-2.4}&1&199 \end{bmatrix}$ \\[0.2cm]
& Unconstrained & $\begin{bmatrix} \bm{1882.1}&191&1 \end{bmatrix} \atop \begin{bmatrix} 2.0&2\text{e}-14&6.9\\1.0&1&3 \end{bmatrix}$  & $\begin{bmatrix} \bm{459.7}&113&1 \end{bmatrix} \atop \begin{bmatrix} 3.0&0.1&11.4\\2.0&1&8 \end{bmatrix}$ & $\begin{bmatrix} \bm{229.3}&66&1 \end{bmatrix} \atop \begin{bmatrix} 4.9&1.2&12.0\\3.7&1&10 \end{bmatrix}$ & $\begin{bmatrix} \bm{102.9}&37&1 \end{bmatrix} \atop \begin{bmatrix} 6.4&2.9&10.0\\6.7&1&16 \end{bmatrix}$  \\[0.7cm]
& $\leq 12$ trades & $\begin{bmatrix} \bm{186.3}&12&1 \end{bmatrix} \atop \begin{bmatrix} 12.9&7.2&31.8\\16.6&1&67 \end{bmatrix}$ & $\begin{bmatrix} \bm{158.0}&12&1 \end{bmatrix} \atop \begin{bmatrix} 11.8&5.8&30.6\\15.3&1&49 \end{bmatrix}$ & $\begin{bmatrix} \bm{137.9}&12&1 \end{bmatrix} \atop \begin{bmatrix} 11.9&7.5&24.6\\18.0&5&37 \end{bmatrix}$ & $\begin{bmatrix} \bm{89.0}&12&3 \end{bmatrix} \atop \begin{bmatrix} 10.7&5.9&18.3\\18.0&8&36 \end{bmatrix}$  \\[0.7cm]
& $\geq 10$ days waiting & $\begin{bmatrix} \bm{215.2}&18&10 \end{bmatrix} \atop \begin{bmatrix} 9.2&3.3&13.9\\10.8&9&14 \end{bmatrix}$ & $\begin{bmatrix} \bm{164.1}&18&10 \end{bmatrix} \atop \begin{bmatrix} 7.5&2.5&12.4\\ 10.6&8&14 \end{bmatrix}$ & $\begin{bmatrix} \bm{116.2}&15&10 \end{bmatrix} \atop \begin{bmatrix} 9.0&4.9&12.3\\ 13.3&10&20 \end{bmatrix}$ & $\begin{bmatrix} \bm{70.7}&11&10 \end{bmatrix} \atop \begin{bmatrix} 9.7&5.8&18.3\\ 17.6&10&29 \end{bmatrix}$ \\[0.55cm]
\colrule
\multirow{4}{*}{\begin{turn}{90} \hspace{-0.85cm} \large{Summary} \end{turn}} & 
Buy-and-Hold & $ \bm{23.2}$ & $\bm{20.4}$ & $\bm{17.8}$ & $\bm{12.5}$ \\[0.2cm]
& Unconstrained & $\bm{23004.2}$ & $\bm{4320.7}$ & $\bm{1707.3}$ & $\bm{630.6}$  \\[0.2cm]
& $\leq 12$ trades & $\bm{867.2}$ & $\bm{748.2}$ & $\bm{637.3}$ & $\bm{477.3}$  \\[0.2cm]
& $\geq 10$ days waiting & $\bm{923.5}$ & $\bm{705.8}$ & $\bm{538.7}$ & $\bm{370.6}$ \\[0.1cm]
\botrule
\end{tabular}}
\end{center}
\end{table}

\begin{table}
\begin{center}
\tbl{Summary of quantitative results of Example 2 for the second case: time-synchronized trading with diversification for all times.\label{tab3_ex2}}
{
\def\arraystretch{0.7}
\begin{tabular}{@{}llcccc}\toprule
& & $\epsilon=0$ & $\epsilon=0.5$ & $\epsilon=1$ & $\epsilon=2$ \\\colrule
\multirow{4}{*}{\begin{turn}{90} \hspace{-2cm} \large{$q=0$} \end{turn}} & 
Buy-and-Hold & $\begin{bmatrix} \bm{18.2}&1&199 \end{bmatrix}$ & $\begin{bmatrix} \bm{17.0}&1&199 \end{bmatrix}$ & $\begin{bmatrix} \bm{15.9}&1&199 \end{bmatrix}$ & $\begin{bmatrix} \bm{13.6}&1&199 \end{bmatrix}$ \\[0.2cm]
& Unconstrained & $\begin{bmatrix} \bm{17450.5}&184&1 \end{bmatrix} \atop \begin{bmatrix} 3.3&0.1&19.5\\1.1&1&3 \end{bmatrix}$  & $\begin{bmatrix} \bm{3092.0}&126&1 \end{bmatrix} \atop \begin{bmatrix} 4.3&0.2&21.9\\1.7&1&5 \end{bmatrix}$ & $\begin{bmatrix} \bm{1099.5}&95&1 \end{bmatrix} \atop \begin{bmatrix} 5.9&0.7&24.7\\2.7&1&8 \end{bmatrix}$ & $\begin{bmatrix} \bm{354.5}&50&1 \end{bmatrix} \atop \begin{bmatrix} 8.6&1.8&26.2\\5.0&1&21 \end{bmatrix}$  \\[0.7cm]
& $\leq 12$ trades & $\begin{bmatrix} \bm{454.3}&12&2 \end{bmatrix} \atop \begin{bmatrix} 19.0&7.0&56.7\\13.8&2&55 \end{bmatrix}$ & $\begin{bmatrix} \bm{377.9}&12&3 \end{bmatrix} \atop \begin{bmatrix} 19.0&5.4&54.3\\15.3&3&55 \end{bmatrix}$ & $\begin{bmatrix} \bm{326.4}&12&2 \end{bmatrix} \atop \begin{bmatrix} 20.8&5.4&54.0\\18.4&3&49 \end{bmatrix}$ & $\begin{bmatrix} \bm{258.7}&12&2 \end{bmatrix} \atop \begin{bmatrix} 22.3&5.4&50.1\\19.0&3&46 \end{bmatrix}$  \\[0.7cm]
& $\geq 10$ days waiting & $\begin{bmatrix} \bm{440.8}&17&10 \end{bmatrix} \atop \begin{bmatrix} 12.9&3.8&33.5\\11.7&5&16 \end{bmatrix}$ & $\begin{bmatrix} \bm{341.2}&16&10 \end{bmatrix} \atop \begin{bmatrix} 11.5&2.1&31.4\\ 11.7&5&16 \end{bmatrix}$ & $\begin{bmatrix} \bm{269.3}&15&10 \end{bmatrix} \atop \begin{bmatrix} 11.1&1.8&29.5\\ 12.4&5&16 \end{bmatrix}$ & $\begin{bmatrix} \bm{186.6}&12&10 \end{bmatrix} \atop \begin{bmatrix} 17.9&12.1&26.7\\ 18.5&11&36 \end{bmatrix}$ \\[0.55cm]
\colrule
\multirow{4}{*}{\begin{turn}{90} \hspace{-2cm} \large{$q=1$} \end{turn}} & 
Buy-and-Hold & $\begin{bmatrix} \bm{1.7}&1&199 \end{bmatrix}$ & $\begin{bmatrix} \bm{0.6}&1&199 \end{bmatrix}$ & $\begin{bmatrix} \bm{-0.4}&1&199 \end{bmatrix}$ & $\begin{bmatrix} \bm{-2.4}&1&199 \end{bmatrix}$ \\[0.2cm]
& Unconstrained & $\begin{bmatrix} \bm{3009.4}&177&1 \end{bmatrix} \atop \begin{bmatrix} 2.4&0.02&7.8\\1.1&1&3 \end{bmatrix}$  & $\begin{bmatrix} \bm{523.5}&89&1 \end{bmatrix} \atop \begin{bmatrix} 3.6&0.4&13.0\\2.5&1&8 \end{bmatrix}$ & $\begin{bmatrix} \bm{259.9}&46&1 \end{bmatrix} \atop \begin{bmatrix} 6.1&0.6&18.6\\5.8&1&17 \end{bmatrix}$ & $\begin{bmatrix} \bm{137.8}&24&1 \end{bmatrix} \atop \begin{bmatrix} 9.1&2.9&25.9\\10.3&2&23 \end{bmatrix}$  \\[0.7cm]
& $\leq 12$ trades & $\begin{bmatrix} \bm{142.8}&12&2 \end{bmatrix} \atop \begin{bmatrix} 8.9&5.0&18.0\\11.3&2&55 \end{bmatrix}$ & $\begin{bmatrix} \bm{112.7}&12&3 \end{bmatrix} \atop \begin{bmatrix} 8.2&0.8&16.8\\13.5&4&55 \end{bmatrix}$ & $\begin{bmatrix} \bm{87.7}&12&2 \end{bmatrix} \atop \begin{bmatrix} 9.8&5.4&19.7\\14.8&4&49 \end{bmatrix}$ & $\begin{bmatrix} \bm{67.3}&12&2 \end{bmatrix} \atop \begin{bmatrix} 11.1&3.9&22.4\\15.3&4&46 \end{bmatrix}$  \\[0.7cm]
& $\geq 10$ days waiting & $\begin{bmatrix} \bm{179.1}&17&10 \end{bmatrix} \atop \begin{bmatrix} 7.6&0.7&15.7\\11.7&5&16 \end{bmatrix}$ & $\begin{bmatrix} \bm{135.9}&14&10 \end{bmatrix} \atop \begin{bmatrix} 8.1&0.8&15.1\\ 12.8&5&24 \end{bmatrix}$ & $\begin{bmatrix} \bm{116.0}&12&10 \end{bmatrix} \atop \begin{bmatrix} 9.0&2.1&23.4\\ 16.6&10&41 \end{bmatrix}$ & $\begin{bmatrix} \bm{73.2}&9&10 \end{bmatrix} \atop \begin{bmatrix} 11.3&6.0&14.8\\ 19.8&11&32 \end{bmatrix}$ \\[0.55cm]
\colrule
\multirow{4}{*}{\begin{turn}{90} \hspace{-2cm} \large{$q=2$} \end{turn}} & 
Buy-and-Hold & $\begin{bmatrix} \bm{3.3}&1&250 \end{bmatrix}$ & $\begin{bmatrix} \bm{2.1}&1&250 \end{bmatrix}$ & $\begin{bmatrix} \bm{1.1}&1&250 \end{bmatrix}$ & $\begin{bmatrix} \bm{-0.9}&1&250 \end{bmatrix}$ \\[0.2cm]
& Unconstrained & $\begin{bmatrix} \bm{1640.8}&177&1 \end{bmatrix} \atop \begin{bmatrix} 2.1&0.02&10.6\\1.1&1&4 \end{bmatrix}$  & $\begin{bmatrix} \bm{342.9}&91&1 \end{bmatrix} \atop \begin{bmatrix} 3.2&0.1&9.5\\2.4&1&6 \end{bmatrix}$ & $\begin{bmatrix} \bm{170.3}&45&1 \end{bmatrix} \atop \begin{bmatrix} 5.8&0.4&11.4\\6.1&1&21 \end{bmatrix}$ & $\begin{bmatrix} \bm{87.9}&20&1 \end{bmatrix} \atop \begin{bmatrix} 7.9&3.9&11.7\\11.0&3&22 \end{bmatrix}$  \\[0.7cm]
& $\leq 12$ trades & $\begin{bmatrix} \bm{114.8}&12&2 \end{bmatrix} \atop \begin{bmatrix} 8.2&4.6&14.6\\11.6&2&55 \end{bmatrix}$ & $\begin{bmatrix} \bm{92.2}&11&3 \end{bmatrix} \atop \begin{bmatrix} 9.4&0.6&24.6\\15.7&3&59 \end{bmatrix}$ & $\begin{bmatrix} \bm{74.4}&11&2 \end{bmatrix} \atop \begin{bmatrix} 10.3&1.8&28.2\\18.3&3&53 \end{bmatrix}$ & $\begin{bmatrix} \bm{50.0}&7&6 \end{bmatrix} \atop \begin{bmatrix} 15.9&3.8&25.6\\36.0&6&53 \end{bmatrix}$  \\[0.7cm]
& $\geq 10$ days waiting & $\begin{bmatrix} \bm{134.2}&16&10 \end{bmatrix} \atop \begin{bmatrix} 6.4&0.3&11.7\\12.4&10&16 \end{bmatrix}$ & $\begin{bmatrix} \bm{102.9}&13&10 \end{bmatrix} \atop \begin{bmatrix} 8.4&0.5&12.2\\ 17.3&11&33 \end{bmatrix}$ & $\begin{bmatrix} \bm{87.8}&12&11 \end{bmatrix} \atop \begin{bmatrix} 7.3&4.5&10.1\\ 14.7&13&21 \end{bmatrix}$ & $\begin{bmatrix} \bm{52.8}&7&10 \end{bmatrix} \atop \begin{bmatrix} 13.9&8.1&19.3\\ 26.7&25&32 \end{bmatrix}$ \\[0.55cm]
\colrule
\multirow{4}{*}{\begin{turn}{90} \hspace{-0.85cm} \large{Summary} \end{turn}} & 
Buy-and-Hold & $ \bm{23.2}$ & $\bm{20.4}$ & $\bm{17.8}$ & $\bm{12.5}$ \\[0.2cm]
& Unconstrained & $ \bm{22100.7}$ & $\bm{3958.4}$ & $\bm{1529.7}$ & $\bm{580.2}$  \\[0.2cm]
& $\leq 12$ trades & $\bm{711.9}$ & $\bm{582.8}$ & $\bm{488.5}$ & $\bm{376.0}$  \\[0.2cm]
& $\geq 10$ days waiting & $ \bm{754.1}$ & $ \bm{580.0}$ & $\bm{473.1}$ & $\bm{312.6}$ \\[0.1cm]
\botrule
\end{tabular}}
\end{center}
\end{table}

We consider 16 currencies and 15 non-currency assets. Real-world data is obtained according to Table \ref{tab1_ex2}. We consider the time horizon August 5, 2015 until August 3, 2016. Because of different trading holidays in the different countries, a total of $199$ trading days could be determined common to all assets. We diversify in three assets at every trading time $t$, i.e., we set $Q=3$. 

We distinguish between two cases: \emph{synchronous} and \emph{asynchronous} trading. Quantitative results are summarized in Tables \ref{tab2_ex2} and \ref{tab3_ex2}, respectively. The results for all $Q$ trajectories are reported. The ``Summary''-section in Tables \ref{tab2_ex2} and \ref{tab3_ex2} reports the sum of returns of all $Q$ trajectories. Exemplary, results are further visualized in Figure \ref{fig:ex2fig3}.  The black-dashed horizontal line in the corresponding top subplots denotes $N_c=16$ to distinguish currency and non-currency asset investments. 

For performance comparison, we consider a Buy-and-Hold strategy, whereby an asset is bought initially and then held. The most performant non-currency assets from Table \ref{tab1_ex2} for the time frame of interest were, in order,  the IBOVESPA (BRA), the Dow Jones Russia GDR (RUS) and the S\&P 500 (USA). Associated returns are reported in Table \ref{tab2_ex2} and \ref{tab3_ex2}, where we attribute the IBOVESPA to $q=0$ and the other two assets to $q=1$ and $q=2$, respectively.  

Interpretation of results is in line with Section \ref{subsec_ex1}. In particular, the influence of transaction costs and the encouragement of frequent trading upon relatively small percentage gains is recurrent. 

A remark about computational complexity needs to be made. The total number of states, $N_z(N_t)$, without consideration of any heuristics is 6139, 71611 and 59905 for the three cases (unconstrained, constraint of at most $K=12$ trades, and constraint of waiting at least $D=10$ days between any two trades). These numbers can be computed according to the formulas stated in Section \ref{sec_multi_stage_opt_without_diversification}. Then, applying the heuristics from Section \ref{subsec_case2_NoDiversificationCstrt} and \ref{subsec_case3_NoDiversificationCstrt} to the given \texttt{finance.yahoo}-data trajectories, we measured (to give one example) $N_z(N_t)=65238$ and $N_z(N_t)=33161$ for the latter two cases, $q=0$ and $\epsilon=0$. Similar results are obtained for the other transaction cost levels and the other $q$-trajectories, resulting in overall computation times (for all $q=0,1,2$) in the tens of minutes. In contrast, for the unconstrained case, overall computation times for the generation of all $Q=3$ transition graphs were on average only slightly more than 10 seconds, thereby making the unconstrained case much more suitable for fast analysis of sets of multiple assets and foreign exchange rate trajectories. Secondly, the trajectories for $q=0$ are identical for both time-asynchronous and -synchronous trading. However, for  the remaining investment trajectories with $q>0$, the number of states is much lower for time-synchronous trading in comparison to the asynchronous case. For time-synchronous trading, $Q=3$ and a bound on the total admissable number of trades, the total number of states is 63803 for $q=0$, but 39713 and 23549 for $q=1$ and $q=2$, respectively. All numerical experiments throughout this paper were conducted on a laptop running Ubuntu 14.04 equipped with an Intel Core i7 CPU @ 2.80GHz$\times$8, 15.6 GB of memory, and using Python 2.7.

\begin{figure}
\centering
\includegraphics[width=0.9\textwidth]{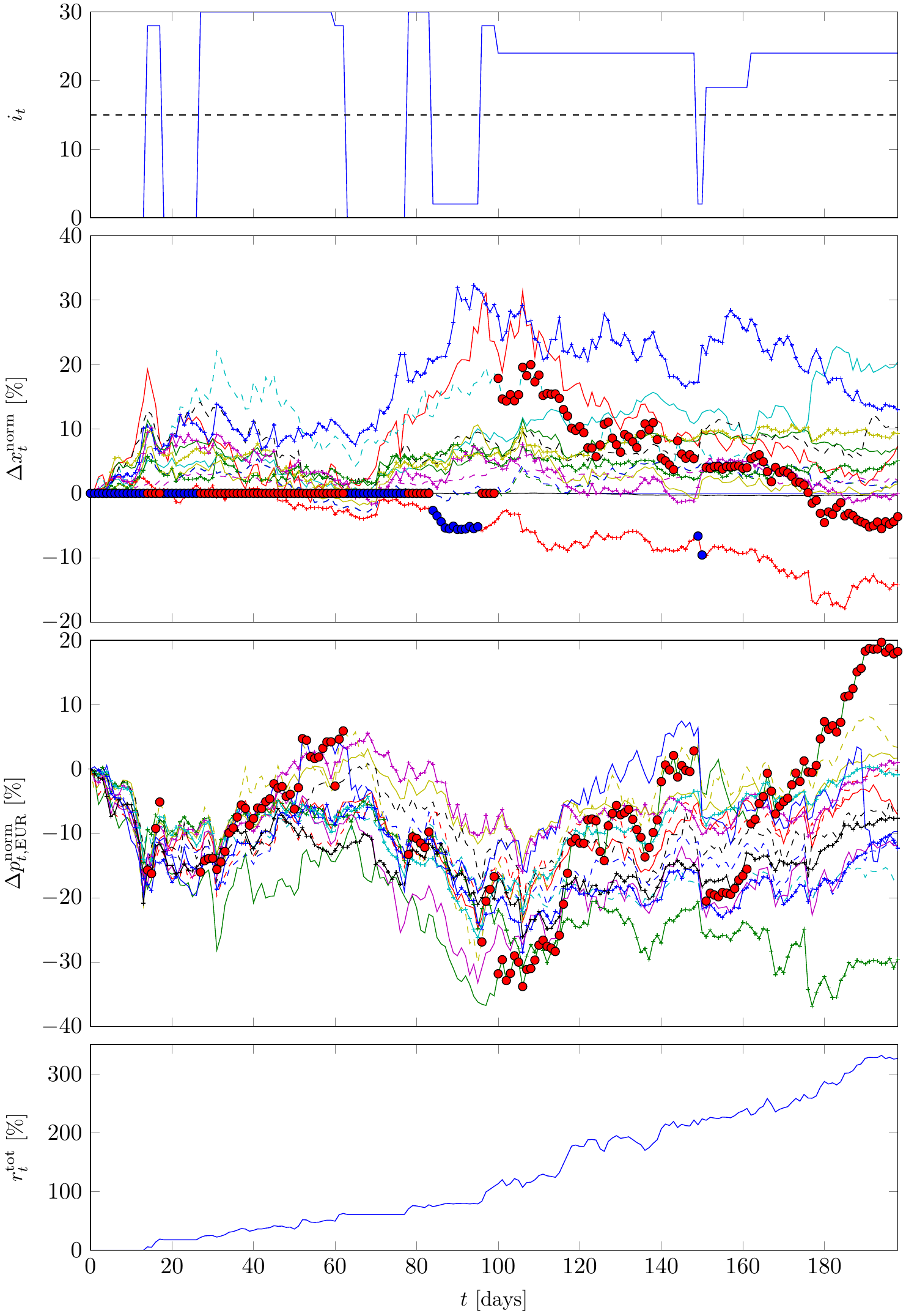}
\caption{Example 2. The results for $q=0$, at most $K=12$ admissable trades and transaction cost level $\epsilon=1$. For $q=0$, the results for asynchronous and synchronous trading are identical. See Section \ref{subsec_ex2} for interpretation.}
\label{fig:ex2fig3}
\end{figure}

\subsection{Example 3: A downtrending and an uptrending stock\label{subsec_ex3}}

\begin{table}
\begin{center}
\tbl{Summary of quantitative results of Example 3. Comparison of a downtrending and an uptrending stock for the time period between August 10, 2015 and August 8, 2016. The exemplary downtrending stock is of Deutsche Bank AG (\texttt{finance.yahoo}-symbol: \texttt{DBK.DE}). The exemplary uptrending stock is of Adidas AG (\texttt{finance.yahoo}-symbol: \texttt{ADS.DE}). See Section \ref{subsec_ex3} for interpretation.\label{tab_ex3}}
{
\def\arraystretch{0.7}
\begin{tabular}{@{}lc}\toprule
\texttt{DKB.DE} & $\epsilon=1$ \\\colrule
Buy-and-Hold & $\begin{bmatrix} \bm{-61.4}&1&260 \end{bmatrix}$ \\[0.2cm]
Unconstrained & $\begin{bmatrix} \bm{382.8}&51&1 \end{bmatrix} \atop \begin{bmatrix} 7.5&1.5&21.8\\4.2&1&11 \end{bmatrix}$ \\[0.55cm]
\botrule
\end{tabular}
\hspace{2cm} \begin{tabular}{@{}lc}\toprule
\texttt{ADS.DE} & $\epsilon=1$ \\\colrule
Buy-and-Hold & $\begin{bmatrix} \bm{95.6}&1&260 \end{bmatrix}$ \\[0.2cm]
Unconstrained & $\begin{bmatrix} \bm{249.4}&43&1 \end{bmatrix} \atop \begin{bmatrix} 7.0&1.0&20.1\\7.8&2&24 \end{bmatrix}$ \\[0.55cm]
\botrule
\end{tabular}
}
\end{center}
\end{table}

The ultimate example compares achievable performances for an examplary downtrending and an an uptrending stock. The exemplary downtrending stock is of Deutsche Bank AG (\texttt{finance.yahoo}-symbol: \texttt{DKB.DE}). The exemplary uptrending stock is of Adidas AG (\texttt{finance.yahoo}-symbol: \texttt{ADS.DE}). Both stocks are listed in the German stock index (DAX). The time-frame considered is August 10, 2015 until August 8, 2016. There are 260 potential trading days. Both stocks are traded in currency EUR. We thus find optimal investment trajectories when i) trading \texttt{DKB.DE} and EUR, and ii) trading \texttt{ADS.DE}  and EUR. We assume propotional costs of $1\%$ identical for buying and selling, i.e., $\epsilon = \epsilon_\text{buy}=\epsilon_\text{sell}$. We set $\beta=50$.
Results are summarized in Table \ref{tab_ex3} and Figure \ref{fig:ex3_figDBK}. 

Unexpectedly and remarkably, the yearly return associated with the optimal investment trajectory of the downtrending stock is higher than its uptrending counterpart: $382.8\%$ vs. $249.2\%$. Importantly, note that the corresponding Buy-and-Hold returns are $-61.4\%$ and $95.6\%$, respectively. While overall downtrending, the price of \texttt{DKB.DE} indicates temporary steep price increases. Furthermore, these occur mostly towards the second half of the time-period of interest, and  thus imply stronger return growth due to the already compounded portfolio wealth that is available for investing at that time (instead of the initial $m_0$). Naturally, without a posteriori knowledge of price evolutions, an uptrending stock such as \texttt{ADS.DE} offers the advantage that missing the right selling dates is less important. Interestingly, both downtrending and uptrending are traded optimally upon similar short-term average price increases: $7.5\%$ and $7\%$. Similarly, the optimal holding periods of the stocks are short with on average $4.2$ and $7.8$ days, respectively.

\begin{figure}
\centering
\input{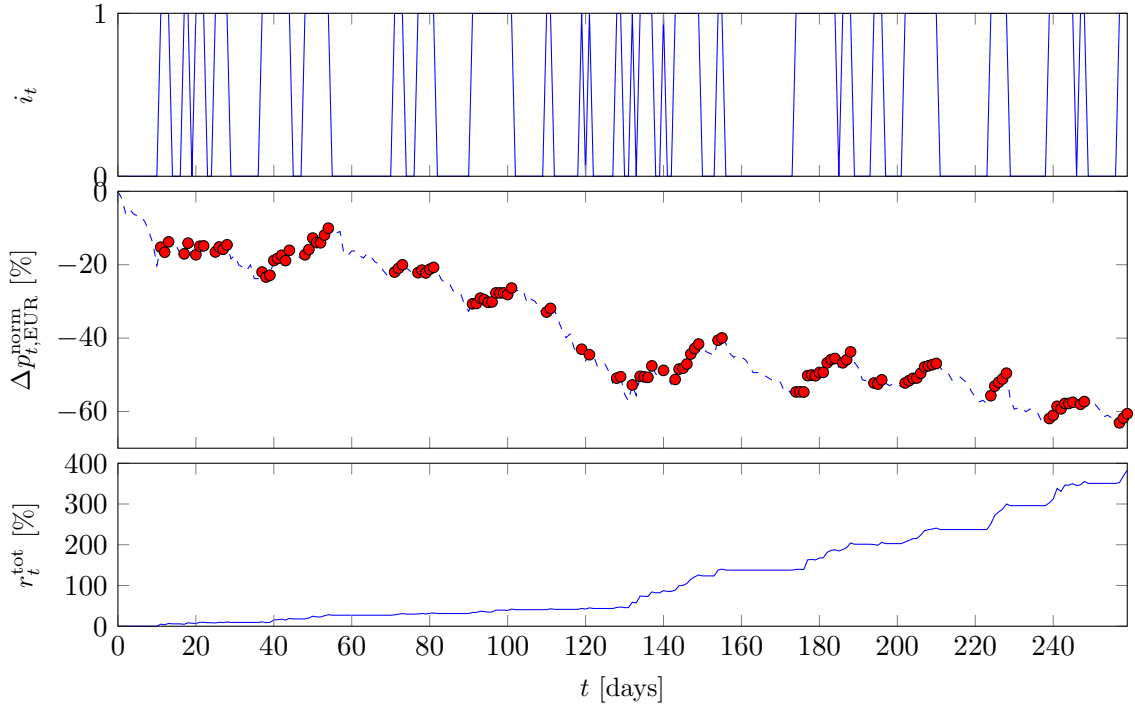}
\caption{The downtrending stock of Example 3 for the unconstrained trading case in case of proportional transaction costs of $1\%$, see Section \ref{subsec_ex3}. The exemplary downtrending stock is of Deutsche Bank AG (\texttt{finance.yahoo}-symbol: \texttt{DKB.DE}). See also Table \ref{tab_ex3}.}
\label{fig:ex3_figDBK}
\end{figure}


\section{Conclusion\label{sec_Conclusion}}

We developed a simple graph-based method for a posteriori (historical) multi-variate multi-stage optimal trading under transaction costs and a diversification constraint. Three variants were discussed, including  unconstrained trading frequency, a fixed number of total admissable trades, and the waiting of a specific time-period after every executed trade until the next trade. Findings were evaluated quantitatively on real-world data.

It was illustrated that transaction cost levels are decisive for achievable performance and significantly influence optimal trading frequency. Quantitative results further indicated optimal trading upon occasion rather than on fixed trading intervals, and, dependent on transaction cost levels, upon single- to low double-digit percentage gains with respect to the reference currency, and exploiting short-term trends. Achievable returns for optimized trading are uncomparably outperforming Buy-and-Hold strategies. Naturally, these returns are very difficult to achieve in practice without knowledge of future price and foreign exchange rate evolutions. 

The fundamental motivation and possibly best application of this work is to use it for i) the preparatory and \emph{automated labeling} of financial time-series data, which is almost unlimitedly available, and where transaction cost level $\epsilon$ can then be regarded as a hyperparameter for the desired tuning of labeled data, before ii) developing supervised machine learning applications for algorithmic trading and screening systems. This is subject of ongoing work.

\bibliography{rQUFguide}

\end{document}